\newcommand{\m}[1]{\ensuremath{#1}\xspace}
\newcommand{\trval}[1]{\m{{\bf #1}}}
	\newcommand{\limplies}{\m{\Rightarrow}}
	\newcommand{\lequiv}{\m{\Leftrightarrow}}
	\newcommand{\lrule}{\m{\leftarrow}}
	\newcommand{\cause}{\m{\stackrel{c}{\lrule}}}
	\newcommand{\ltrue}{\trval{t}}
	\newcommand{\lfalse}{\trval{f}}
	\newcommand{\lunkn}{\trval{u}}
	\newcommand{\lincon}{\trval{i}}
	\newcommand{\bigand}{\m{\bigwedge}}
	\newcommand{\bigor}{\m{\bigvee}}
	\newcommand{\ra}{\m{\rightarrow}}
	\newcommand{\voc}{\m{\Sigma}}
	\newcommand{\struct}{\m{I}}
	\newcommand{\theory}{\m{\mathcal{T}}}
	\newcommand{\PP}{\m{\mathcal{P}}}
	\newcommand{\LL}{\m{\mathcal{L}}}
	\NewDocumentCommand\inter{g+g}{%
	  \IfNoValueTF{#1}
	    {\struct}
	    {\m{#1^{#2}}}}
	\newcommand{\XXX}{\m{\overline{X}}}
	\renewcommand{\int}{\m{\mathbb{Z}}}
	\newcommand{\leqp}{{\m{\,\leq_p\,}}}
	\DeclareMathOperator\lfp{lfp}
	\NewDocumentCommand\subs{g+g}{%
	  \IfNoValueTF{#1}
	    {\m{/}}
	    {\m{#1/ #2}}}
\newcommand{\ouracronym}[3]{%
	\newacronym{#1}{#2}{#3}
	\expandafter\newcommand\csname #1\endcsname{\gls{#1}\xspace}%
}
	\def\ifenv#1{
	\def\@tempa{#1}%
	\def\@ttempa{#1*}%
	\ifx\@tempa\@currenvir
	\expandafter\@firstoftwo
	\else
	\expandafter\@secondoftwo
	\fi
	}
	\newcommand{\ddrule}[4]{\ensuremath{#1 \leftarrow #2 & \{#3\} & #4}}
	\newcommand{\drule}[2]{\ensuremath{#1 & \leftarrow & #2}}
	\newcommand{\darule}[4]{\ensuremath{#1 \leftarrow #2 & \{#3\} & #4}}
	\newcommand{\arule}[2]{\ensuremath{#1 \, &\leftarrow \, #2}}
	\newenvironment{ldef}{\left\{\begin{array}{l@{ \,}l@{\,}l}}{\end{array}\right\}}
	\newcommand{\LNDRule}[2]{
	\ifenv{array}
	{\drule{#1}{#2}}
	{ \ifenv{align}
		{\arule{#1}{#2}}
		{\ifenv{align*}
		{\arule{#1}{#2}}
		{ERROR: using LDRule in unsupported environment: \@currenvir}
		}
	}
	}
	\newcommand{\LDRule}[4]{
	\ifenv{array}
	{\ddrule{#1}{#2}{#3}{#4}}
	{ \ifenv{align}
		{\darule{#1}{#2}{#3}{#4}}
		{\ifenv{align*}
		{\darule{#1}{#2}{#3}{#4}}
		{ERROR: using LDRule in unsupported environment: \@currenvir}
		}
	}
	}
	\NewDocumentCommand\LRule{m+g+g+g}{%
		\IfNoValueTF{#2}%
		{#1.&}{%
		\IfNoValueTF{#3}
		{\LNDRule{#1}{#2.}}
		{\LDRule{#1}{#2.}{#3}{#4}}%
		}
	}
	\NewDocumentCommand\CLRule{m+g}{%
	\ifenv{array}
	{\cdrule{#1}{#2}}
	{ \ifenv{align}
		{\carule{#1}{#2}}
		{\ifenv{align*}
			{\carule{#1}{#2}}
			{ERROR: using CLRule in unsupported environment: \@currenvir}
		}
	}
	}
	\NewDocumentCommand\carule{m+g}{%
		\IfNoValueTF{#2}
			{\ensuremath{#1.}}
			{\ensuremath{#1 \, &\cause \, #2}}}
	\NewDocumentCommand\cdrule{m+g}{%
		\IfNoValueTF{#2}
			{\ensuremath{#1.}}
			{\ensuremath{#1 & \cause & #2}}}
	\newcommand{\algrule}[4]{
	\hbox{{#1}:}& 
	\quad #2 ~\longrightarrow~ #3 
	\hbox{~ if } #4\\
	}
	\newcommand{\AlgoRule}[4]{
	\ifenv{array}
	{\algrule{#1}{#2}{#3}{#4}}
		{ERROR: using AlgoRule in unsupported environment: \@currenvir}
	}
\newcommand{\commentstyle}{\color{Gray}}
	\lstdefinelanguage{idp}{
		morekeywords=[1]{namespace,vocabulary,theory,structure,procedure,term,set,formula, spec, specification,query},
		morekeywords=[2]{include,using,type,isa,contains,partial,extern,LFD,GFD,constructed,from,constraint,pred,supertype,of,subtype,define},
		morekeywords=[3]{int,float,char,string,nat},
		morekeywords=[4]{if,then,else,for,end},
		morecomment=[s]{/*}{*/},	
		morecomment=[l]{//}
	}
	\newcommand{\ignore}[1]{}
	\newcommand{\namedcomment}[3]{
		\ifthenelse{\boolean{nocomments}}
		{} %IF no comments, write nothing
		{ %Otherwise
			\ifthenelse{\boolean{commentmargin}}
				{ {\color{#3} \marginpar{\color{#3}\sc #2}#1}  } %Name in margin
				{  {\color{#3} {\sc #2}: #1}  } %Name not in margin
		}
	}
	\newcommand{\mnamedcomment}[3]{\ifthenelse{\boolean{nocomments}}{}{{\marginpar{ \color{#3}{\sc #2}:#1}}}}
\newtheorem{theorem}{Theorem}[section]
\newtheorem{definition}[theorem]{Definition} % [section]
\newtheorem{proposition}[theorem]{Proposition} % [section]
\newtheorem{example}[theorem]{Example} % [section]
\newtheorem{lemma}[theorem]{Lemma} % [section]
\renewcommand\cite[1]{\citep{#1}}
\newcommand\setcitation[2]{%
  \csdef{mycommoncitation#1}{#2}}
\newcommand\getcitation[1]{%
  \csuse{mycommoncitation#1}}
\newcommand\refto[1]{%
      \getcitation{#1}}
\newcommand\mycite[1]{%
      \ifcsname mycommoncitation#1\endcsname%
   \cite{\getcitation{#1}}%
  \else%
    \cite{#1}
  \fi%
}
\begin{document}
% The file aaai.sty is the style file for AAAI Press 
% proceedings, working notes, and technical reports.
%
\title[Knowledge Compilation of Logic Programs Using AFT]{
Knowledge Compilation of Logic Programs Using Approximation Fixpoint Theory
}

\author[B.~Bogaerts and G.~Van den Broeck]{Bart Bogaerts and Guy Van den Broeck\\
Department of Computer Science, KU Leuven, Belgium\\
\email{bart.bogaerts@cs.kuleuven.be,guy.vandenbroeck@cs.kuleuven.be}}

% \author{Bart Bogaerts \\
% Department of Computer Science\\
% KU Leuven\\
% 3001 Heverlee, Belgium\\
% {bart.bogaerts@cs.kuleuven.be}\\
% \And
%  Joost Vennekens\\
%  Department of Computer Science\\
%   Campus De Nayer, KU Leuven\\
%  2860 Sint-Katelijne-Waver, Belgium\\
%  {joost.vennekens@cs.kuleuven.be}
%  \And
%  Marc Denecker\\
% Department of Computer Science\\
% KU Leuven\\
% 3001 Heverlee, Belgium\\
% {marc.denecker@cs.kuleuven.be}\\
%  }
 
\maketitle

\newcommand\pstruct{\m{\mathcal{I}}}
\newcommand\compile{\m{\textsc{Compile}}}
\newcommand\WMC{\m{\textsc{WMC}}}

	\setboolean{commentmargin}{false}
	
\newboolean{showproofs}
\setboolean{showproofs}{false}
%  \setboolean{nocomments}{true} % OM ALLE COMMENTAAR UIT TE SCHAKELEN
\newcommand\erasableproof[1]{\ifthenelse{\boolean{showproofs}}{#1}{}}

\newcounter{numberOfStoredProofs}
\stepcounter{numberOfStoredProofs}

\newcommand\proofintext[1]{#1}
\renewcommand\proofintext[1]{}

\newboolean{displayproof}
\newcommand\thmwithproof[5]{
\thmwithproofgeneral{#1}{#2}{#3}{#4}{#5}{false}
}
\newcommand\thmwithproofhere[5]{
\thmwithproofgeneral{#1}{#2}{#3}{#4}{#5}{true}
}
\newcommand\thmwithproofgeneral[6]{
\begin{#3}\label{#1}#4
\end{#3}
\setboolean{displayproof}{#6}
\ifthenelse{\boolean{displayproof}}
%IF
{\begin{proof}#5\end{proof}} 
%ELSE
{ \proofintext{ \begin{proof} 	#5 \end{proof}}
\expandafter\newcommand\csname mystoredproof\the\value{numberOfStoredProofs} \endcsname{\ \newline\noindent{\it #2 \ref{#1}.} {\newline #4}\ \newline \begin{proof}#5\end{proof}}
\stepcounter{numberOfStoredProofs}
}
}

\newcommand\inproofappendix[1]{
\expandafter\newcommand\csname mystoredproof\the\value{numberOfStoredProofs} \endcsname{#1}
\stepcounter{numberOfStoredProofs}
}

\newcommand\getproof[1]{
\ifcsname mystoredproof#1 \endcsname%
	\csname mystoredproof#1 \endcsname%
  \else%
  \fi%
}

\newcommand{\proofs}{
\foreach \n in {0,...,\value{numberOfStoredProofs}}{\getproof{\n}}
}

\newcommand{\plat}{\m{L_{p}}}
\newcommand{\dplat}{\m{L^{d}_{p}}}
\newcommand{\sstruct}{\m{\mathcal{A}}}
\newcommand{\spstruct}{\m{\mathcal{S}}}
\newcommand{\proj}{\m{\pi}}
\newcommand{\imcons}{\m{T_\PP}}
\newcommand{\pimcons}{\m{\mathcal{T}_\PP}}
\newcommand{\ppimcons}{\m{\varPsi_\PP}}

\newcommand\tcg{\m{\mathit{r}}}

\newcommand{\guy}[1]{\namedcomment{#1}{GVdB}{OliveGreen}}

\newcommand\Th{\m{\mathit{Th}}}

\newcommand{\modelswfm}{\m{\models_{\mathit{wf}}}}

\begin{abstract}
Recent advances in knowledge compilation introduced techniques to compile \emph{positive} logic programs into propositional logic, essentially exploiting the constructive nature of the least fixpoint computation.
This approach has several advantages over existing approaches: it maintains logical equivalence, does not require (expensive) loop-breaking preprocessing or the introduction of auxiliary variables, and significantly outperforms existing algorithms.
Unfortunately, this technique is limited to \emph{negation-free} programs. 
In this paper, we show how to extend it to general logic programs under the well-founded semantics. 

We develop our work in approximation fixpoint theory, an algebraical framework that unifies semantics of different logics. 
As such, our algebraical results are also applicable to autoepistemic logic, default logic and abstract dialectical frameworks.

\end{abstract}

% \todo{misschien in notatie verduidelijken dat we met $\models$ steeds wfs bedoelen?}

\section{Introduction}\label{sec:intro}
% \todo{duidelijker maken in de intro dat we exponentieel veel WFM computations {\bf tegelijk} doen. En dat we dit polynomiaal kunnen doen door het symbolisch te doen. }

% \todo{misschien ook problog prominenter naar voor schuiven in de introductie als motivatie voor dit werk}

% \todo{we respecteren onze eigen syntax voor logic programs niet}

% \todo{In inleiding: focus op LPs met 2-valued well-founded model, those that can be viewed as representations of inductive definitions}

There is a fundamental tension between the expressive power of a knowledge representation language, and its support for efficient reasoning. Knowledge compilation studies this tension~\cite{cadoli1997survey,darwiche2002knowledge}, by identifying \emph{languages} that support certain queries and transformations efficiently. It studies the relative succinctness of these languages, and is concerned with building \emph{compilers} that can transform knowledge bases into a desired target language.
For example, after compiling two CNF sentences into the OBDD language~\cite{Bryant86}, their equivalence can be checked in polynomial time.
Applications of knowledge compilation are found in diagnosis~\cite{huang2005compiling},
databases~\cite{suciu2011probabilistic}, planning~\cite{palacios2005pruning}, graphical models~\cite{chavira2005compiling,tplp/FierensBRSGTJR15} and machine learning~\cite{lowd2008learning}.
These techniques are most effective when the cost of compilation can be amortised over many queries to the knowledge base. %\todo{Voorbeeldje toevoegen van ``many queries''? Multiple probability distributions over the input symbols of zo?}

Knowledge compilation has traditionally focused on subsets of propositional logic and Boolean circuits in particular~\cite{darwiche2002knowledge,ijcai/Darwiche11}.
Logic programs have received much less attention, which is surprising given their historical significance in AI and current popularity in the form of answer set programming (ASP)~\mycite{ASP}.
Closest in spirit are techniques to encode logic programs into CNF~\cite{DBLP:journals/amai/Ben-EliyahuD94,ijcai/LinZ03,ai/LinZ04,ecai/Janhunen04,jancl/Janhunen06}. A notable difference with traditional knowledge compilation is that many of these encodings are task-specific: the resulting CNF is not equivalent to the logic program. Instead, it is \emph{equisatisfiable} for the purpose of satisfiability checking, or has an identical model count for the purpose of probabilistic inference~\cite{tplp/FierensBRSGTJR15}.\footnote{Probabilistic inference on the CNF may itself perform a second knowledge compilation step.} %, for example into a d-DNNF circuit.}
These encodings often introduce new variables and loop-breaking formulas, which blow up the representation. \citet{tocl/LifschitzR06} showed that there can be no polynomial translation of ASP into a flat propositional logic theory without auxiliary variables.\footnote{Similar, task-specific, translation techniques of logic programs into difference logic \cite{lpnmr/JanhunenNS09} and ordered completion \cite{ai/AsuncionLZZ12} exist. 	}

Recently, \citet{Jonas} introduced a novel knowledge compilation technique for \emph{positive} logic programs. 
As an example, consider the logic program $\PP$ defining the transitive closure $\tcg$ of a binary relation $e$:
\[
 \begin{ldef}
  \forall X,Y: \tcg(X,Y) &\lrule e(X,Y).\\
  \forall X,Y,Z: \tcg(X,Y) &\lrule e(X,Z) \land \tcg(Z,Y).
 \end{ldef}
\]
Intuitively, \citet{Jonas} compute the minimal model of $\PP$ for all interpretations of $e(\cdot,\cdot)$ \emph{simultaneously}. 
They define a lifted least fixpoint computation where the intermediate results are symbolic interpretations of $\tcg(\cdot,\cdot)$ in terms of $e(\cdot,\cdot)$. For example, in a domain $\{a,b,c\}$, the interpretation of $\tcg(a,b)$ in the different steps of the least fixpoint computation would be. 
\begin{align*}
\tcg(a,b)&: &	\lfalse &\qquad\rightsquigarrow &e(a,b)&\qquad\rightsquigarrow & e(a,b) \lor (e(a,c)\land e(c,b)) 					%&\rightsquigarrow  & e(a,b) \lor (e(a,c)\land e(c,b)) %\\
% \tcg(a,a)&: &   \false &\rightsquigarrow &e(a,a)&\rightsquigarrow  e(a,a) \lor (e(a,c)\land e(c,a)) \lor (e(a,b)\land e(b,a)) 	\\&&&&&\rightsquigarrow    e(a,b) \lor (e(a,c)\land e(c,b)) 
\end{align*}
I.e., initially, $r(a,b)$ is false; next $r(a,b)$ is derived to be true if $e(a,b)$ holds; finally, $r(a,b)$ also holds if $e(a,c)$ and $e(c,b)$ hold. 
% For each interpretation $I$ of $e$, the value of $e$ can be plugged in the above sequence to obtain a computation of the transitive closure of $e$. 
The result of this sequence is a symbolic, Boolean formula representation of the well-founded model for each interpretation of $e$; this formula can be used for various inference tasks.
This approach has several advantages over traditional knowledge compilation methods: it preserves logical equivalence\footnote{In the sense that an interpretation is a model of the resulting propositional theory if and only if it is a model of the given logic program under the parametrised well-founded semantics.} (and hence, enables us to port any form of inference---e.g., abductive or inductive reasoning, (weighted) model counting, query answering, \dots) and does not require (expensive) loop-breaking preprocessing or auxiliary variables. 
\citet{Jonas} showed that this method for compiling positive programs (into the SDD language~\cite{ijcai/Darwiche11}) significantly outperforms traditional approaches that compile the completion of the program with added loop-breaking formulas.

Unfortunately, the methods of \citet{Jonas} do not work in the presence of negation, i.e., if the immediate consequence operator is non-monotone. 
In this paper, we show how the well-founded model computation from \citet{GelderRS91}, that works on partial interpretations, can be executed symbolically, resulting in the \emph{parametrised well-founded model}. 
By doing this, we essentially compute the well-founded model of an \emph{exponential} number of logic programs at once. 
% is a fixpoint computation of a suitable operator \cite{tcs/Fitting02}, and our techniques apply to this computation as well. 
% 

Our algorithm works in principle on any representation of Boolean formulas; we study complexity for this algorithm taking Boolean circuits as target language; in this case we find that our algorithm has \emph{polynomial} time complexity.
General Boolean circuits are not considered to be an interesting target language, as they are not tractable for any query of interest. 
However, what we achieve here is a \emph{change of semantic paradigm} that uncovers all the machinery for propositional logic (SAT solvers, model counters, etc.). It is a required step before further compiling the circuit into a language such as OBDD or SDD, which do permit tractable querying. 
It is also possible to encode the circuit into CNF, similar to \citet{ecai/Janhunen04}.
There is a long list of queries and transformations that become supported on logic programs (under the well-founded semantics), by virtue of our algorithm. After a transformation to propositional logic, we can use standard tools to check whether one logic program is entailed by another, find models that are minimal with respect to some optimisation term, check satisfiability, count or enumerate models, and forget or condition variables~\cite{darwiche2002knowledge}.
For example, the following definition of the transitive closure of $e$ syntactically differs from the previous. 
\[
 \begin{ldef}
  \forall X,Y: \tcg(X,Y) &\lrule e(X,Y).\\
  \forall X,Y,Z: \tcg(X,Y) &\lrule \tcg(X,Z) \land \tcg(Z,Y).
 \end{ldef}
\]
With our algorithm, we can compile both programs into an OBDD representation. On these OBDDs, we can verify the equivalence of the logic programs using existing OBDD algorithms. As logic programs under the well-founded semantics encode \emph{inductive definitions} \mycite{NaturalInductions}, we now have the machinery to check that two definitions define the same concept for each interpretation of the parameters ($e$ in our example).
Moreover, our algorithm can be stopped at any time to obtain upper and lower bounds on the fixpoint, which gives us \emph{approximate knowledge compilation} for logic programs~\cite{selman1996knowledge}.

The original motivation for this research is the fact that probabilistic inference tools such as ProbLog~\cite{tplp/FierensBRSGTJR15} use knowledge compilation for probabilistic inference by (weighted) model counting; they compile a logic program into a d-DNNF or SDD (with auxiliary variables) and subsequently calling a weighted model counter. Vlasselaer et al.\ showed that for \emph{positive} logic programs, this can be done much more efficiently using bottom-up compilation techniques. 
We extend these techniques to general logic programs to capture the full ProbLog language. 

% for our techniques is probabilistic  The main motivation for these techniques comes from inference 
% \todo{hier: Nog 1 sterke zin extra motivatie. Iets over het feit dat ons algoritme belangrijk is aangezien het bijvoorbeeld toelaat om Weighted Model Counting (the inference used, e.g., in the ProbLog system) meer efficient te doen en bijvoorbeeld toelaat om de weighted model count voor verschillende probability distributions of the parameter symbols te berekenen.}

More generally, we develop our ideas in \emph{approximation fixpoint theory} (AFT), an abstract algebraical theory that captures all common semantics of logic programming, autoepistemic logic, default logic, Dung's argumentation frameworks and abstract dialectical frameworks (as shown by \citet{\refto{AFT}} and \citet{journals/ai/Strass13}). 
Afterwards, we show how the algebraical results apply to logic programming. We thus extend the ideas by \citet{Jonas} in two ways; first, by developing a theory that works for \emph{general} logic programs and secondly by lifting the theory to the algebraical level. 
Due to the high level of abstraction, our proofs are (relatively) compact and our algebraical results are immediately applicable to all aforementioned paradigms.
Due to page restrictions, proofs are postponed to the online appendix (Appendix B) and we only apply our theory to logic programming.

% \bart{Ik ga deze paragraaf morgen eens wat herschrijven. Strikt gezien hebben we nog geen knowledge compilation algoritme gedefinieerd voor AEL, DL, AFs,... Wat er in sectie 4 gebeurt moet opnieuw gebeuren voor die paradigmas. Maar... correctheid en de principes blijven geldig (i.e., de resultaten uit S3). 
% (dit is de reden waarom ik ``we pave the way towards'' schreef.
% Ik slaap er eens een nachtje over... }
% \guy{heb het al deels verbeterd}
% 
% 
% \bart{Change of semantics... Dat zie ik niet graag staan}

Summarised, the main contributions of this paper are as follows:
\textit{(i)} we present the algebraical foundations for a novel knowledge compilation technique for \emph{general} logic programs, 
\textit{(ii)} we apply the algebraical theory to logic programming, resulting in a family of equivalence-preserving algorithms,
\textit{(iii)} we show that Boolean circuits are at least as succinct as propositional logic programs (under the parametrised well-founded semantics), and 
\textit{(iv)} we pave the way towards knowledge compilation for other non-monotonic formalisms, such as autoepistemic logic. %, default logic, argumentation frameworks and abstract dialectical frameworks.

% \guy{what about \cite{jancl/Janhunen06}} 
% \bart{Ik zou die in een adem vermedlen met \cite{ecai/Janhunen04}. Ik zal eens proberen om mijn stukje uit de discussie dat daarover gaat in de inleiding te integreren (daar zijn ngo wat andere referenties die in dit verhaal zouden passen}
% 
% \todo{Duidelijker maken wat het \emph{nut} van contribution (iv) is. Uit mail:
% * FIrst of all it provides insights in the language, e.g., conciseness results etcetera,
% * Secondly, it allows us to use the entire machinery of SAT solvers, SDD or BDD based model counters,... to perform inference on this kind of theories. As such, it avoids having to reinvent the wheel 
% Eigenlijk dus alle redenen die in de eerste paragrafen van deze section staan}

% 
% \bart{ FROM THE DISCUSSION, WHICH I REMOVED:
% For example \citet{ai/LinZ04} used an approach based on loop formulas, which takes exponential space. Several other translations only take polynomial space but introduce new variables \cite{ecai/Janhunen04,jancl/Janhunen06,ijcai/LinZ03}.
%  Our polynomial transformation (Corollary \ref{cor:succinct}) does not contradict this result as we allow a representation in propositional circuits, which are known to be more succinct than propositional logic. }
\ignore{
As a running example, consider the logic program $\PP$ defining the transitive closure $\tcg$ of a binary relation $E$:
\[
 \begin{ldef}
  \forall x, y: \tcg(x,y) &&\lrule e(x,y).\\
  \forall x, y: \tcg(x,y) \lrule \exists z: e(x,z) \land \tcg(z,y).
 \end{ldef}
\]
Encoding $\PP$ into CNF naively (i.e., without any optimisations) for a given set of nodes will result in the completion of $\PP$ together with sentences of the form\footnote{For readability, the loop-breaking formulas are not in CNF.}
\begin{align*}
 \tcg(n_1,n_2) &\limplies  e(n_1,n_2) && \lor (e(n_1,n_3)  \land \tcg(n_3,n_2) \land D_{n_1,n_2,n_3,n_2})\\ &&& \lor (e(n_1,n_4)  \land \tcg(n_4,n_2) \land D_{n_1,n_2,n_4,n_2}) \lor \dots
\end{align*}
The  $D$-atoms are newly introduced symbols with the intended interpretation that $D_{a,b,c,d}$ holds if the fact that $(a,b)$ is in the transitive closure \emph{depends} on the fact that $(c,d)$ is in the transitive closure, i.e., that $\tcg(a,b)$ has a greater level than $\tcg(c,d)$ in the level numbering \cite{ecai/Janhunen04}.\footnote{\citet{ecai/Janhunen04} uses a slightly more complex transformation in order to obtain a translation satisfying certain desirable criteria, such as an identical model count.} 
The translated CNF also enforces that $D$ is asymmetric and transitive.

In this paper, we introduce a framework to efficiently compile \emph{parametrised logic programs} under the well-founded semantics into equivalent \emph{Boolean circuits}.
Our approach is based on the following observations. 
For every interpretation $I$ of the \emph{parameters}, or \emph{extensional predicates}, of \PP (in this case, for $E$), the above logic program induces an immediate consequence operator $T_{\PP}^I$. 
Since this is a monotone operator (it is a positive logic program), the well-founded model is the least fixpoint of this operator, e.g., if $E=\{(a,b),(b,c)\}$, it is the limit of the sequence 
\[\emptyset\to\{\tcg(a,b),\tcg(b,c)\}\to\{\tcg(a,b),\tcg(b,c),\tcg(a,c)\}.\]
For every interpretation $I$ of $E$, we can compute a similar sequence. 
Instead of doing similar computations all over again (an exponential number of times), we define a new operator \pimcons that ``summarises'' $T_{\PP}^I$ for all $I$, and that works on abstract interpretations in which every atom is interpreted by a propositional formula over the parameters of \PP. 
% Furthermore, we can assume that these formulas are already in the target formalism. 
We construct the extended operator $\pimcons$ in such a way that for every $I$, the least fixpoint of $T_{\PP}^I$ can be obtained from the least fixpoint of $\pimcons$.
}
\section{Preliminaries}\label{sec:prelims}
% \todo{in prelims on AFT: reeds intuities over LP geven?}\bart{indien we genoeg plaats ergens anders kunnen winnen}

% We now introduce the necessary background on lattices, operators, approximation fixpoint theory (AFT), and logic programming.

\subsection{Lattices and Approximation Fixpoint Theory}
% \bart{deze sectie zou korter kunnen. Kripke-Kleene, partial stable fixpoints, stable fixpoints,... zijn niet zo relevant voor deze paper. 
% Maar anderzijds, ze nemen ook niet \emph{veel} plaats in beslag en kunnen ze misschien ook gebruikt worden om algebraisch aan te tonen dat we heel eenvoudig knowledge compilation voor andere semantieken kunnen doen. Bijvoorbeeld: voor KK is het duidelijk. Voor stable models weet ik niet meteen hoe het zit Indien plaats te kort, is dat het eerste om te schrappen!}

A \emph{complete lattice} $\langle L,\leq\rangle$ is a set $L$ equipped with a partial order $\leq$ such that every subset $S$ of $L$ has a \emph{least upper bound},  denoted $\bigor S$ and a \emph{greatest lower bound}, denoted $\bigand S$. 
If $x$ and $y$ are two lattice elements, we use the notations 
 $x\land y=\bigand \{x,y\}$ and $x\lor y=\bigor \{x,y\}$.
A complete lattice has a least element $\bot$ and a greatest element $\top$. 
An operator $O:L\to L$ is \emph{monotone} if $x\leq y$ implies that $O(x)\leq O(y)$. 
% An element $x\in L$ is a \emph{prefixpoint}, a \emph{fixpoint}, a \emph{postfixpoint} of $O$ if $O(x)\leq x$, respectively $O(x)=x$, $x\leq O(x)$. 
Every monotone operator $O$ in a %chain complete poset 
complete lattice has a least fixpoint, denoted $\lfp(O)$.
A mapping $f:(L,\leq_L)$ $\to(K,\leq_K)$ between lattices is a \emph{lattice morphism} if it preserves least upper bounds and greatest lower bounds, i.e. if for every subset $X$ of $L$, $f(\bigor X)= \bigor f(X)$ and  $f(\bigand X)= \bigand f(X)$.
%  Lattice morphisms are monotone: $x\leq_L y$ implies $f(x)\leq_K f(y)$.

Given a lattice, approximation fixpoint theory makes uses of the bilattice 
$L^2$.  We define \emph{projections} as usual:
$(x,y)_1=x$ and $(x,y)_2=y$.  Pairs $(x,y)\in L^2$ are used to
approximate all elements in the interval $[x,y] = \{z\mid x\leq
z\wedge z\leq y\}$. We call $(x,y)\in L^2$ \emph{consistent} if $x\leq
y$, that is, if $[x,y]$ is non-empty. We use $L^c$ to denote the set
of consistent pairs. Pairs $(x,x) $ are called
\emph{exact}.  
% We sometimes abuse notation and use the 	tuple $(x,y)$
% and the interval $[x,y]$ interchangeably.  
The \emph{precision
  ordering} on $L^2$ is defined as $(x,y) \leqp (u,v)$ if $x\leq u$
and $v\leq y$. In case $(u,v)$ is consistent,  $(x,y)$ is less precise than $(u,v)$ if $(x,y)$
approximates all elements approximated by $(u,v)$, or in other words
if $[u,v]\subseteq [x,y]$.  If $L$ is a complete lattice, then so is
$\langle L^2,\leqp\rangle$. % is also a complete lattice.

AFT studies fixpoints of operators $O:L\ra L$ through operators approximating $O$.
 An operator $A: L^2\to L^2$  is an \emph{approximator} of $O$ if it is \leqp-monotone,  and has the property that for all $x$, $O(x)\in A(x,x)$. %[x',y']$, where $(x',y')=A{x,x}$.
Approximators are
internal in $L^c$ (i.e., map $L^c$ into $L^c$).
As usual, we restrict our attention to \emph{symmetric} approximators: approximators $A$ such that for all $x$ and $y$, $A(x,y)_1 = A(y,x)_2$.
\citet{DeneckerMT04} showed that the consistent fixpoints of interest are uniquely determined by an approximator's restriction to $L^c$, hence, we only define approximators on $L^c$. 

% Intuitively, the lattice $L$ can be thought of as the set of all interpretations of a given vocabulary. Consistent elements then consist of partial interpretations. The semantic op

AFT studies fixpoints of $O$ using fixpoints of $A$. 
The $A$-Kripke-Kleene fixpoint is the $\leqp$-least fixpoint of $A$ and has the property that it approximates all fixpoints of $O$. 
A partial $A$-stable fixpoint is a pair  $(x,y)$ such that $x=\lfp(A(\cdot,y)_1)$ and $y=\lfp(A(x,\cdot)_2)$. The $A$-well-founded fixpoint is the least precise partial $A$-stable fixpoint. 
 An \emph{$A$-stable fixpoint} of $O$ is a fixpoint $x$ of $O$ such that $(x,x)$ is a partial $A$-stable fixpoint. 
% % 
% 
% DEFINITIE VAN 4-WAARDIGE STABLE AND DE WELL-FOUNDED FIXPOINT? Pairs $(x,y)$ such that $x=\lfp(A(\cdot,y)_1$ and $y=\lfp(A(x,\cdot)_2$. The $A$-well-founded fixpoint is the least precise 4-valued $A$-stable fixpoint. 
% 
The $A$-Kripke-Kleene fixpoint of $O$ can be constructed by iteratively applying $A$, starting from $(\bot,\top)$. 
For the $A$-well-founded fixpoint, \citet{lpnmr/DeneckerV07} worked out a similar constructive characterisation as follows.
% This concept is based on the following intuitions.
%  A well-founded induction of $A$ is a process that uses $A$ to derive more and more information (i.e., to grow in \leqp). 
%  At the same time, a well-founded induction tries to minimise the upper bounds $(x,y)_2$: if making the upper bound smaller can be done while guaranteeing that 
 
% \begin{definition}
An \emph{$A$-refinement} of $(x,y)$ is a pair $(x',y')\in L^2$ satisfying one of the following conditions \textit{(i)} $(x,y)\leqp(x',y')\leqp A(x,y)$, or \textit{(ii)}
% 	\item 
$x'=x$ and  $A(x,y')_2\leq y'\leq y$. 
% \end{itemize}
An $A$-refinement is \emph{strict} if $(x,y)\neq (x',y')$.
% \end{definition}
% 
We call refinements of the first kind \emph{application refinements} and refinements of the second kind \emph{unfoundedness refinements}. 
% 
% 
%  \begin{definition}
 A \emph{well-founded induction} of $A$  is a sequence 
$(x_i,y_i)_{i\leq \beta}$
with $\beta$ an ordinal such that 
\begin{itemize}
	\item $(x_0,y_0) = (\bot,\top)$;
	\item $(x_{i+1},y_{i+1})$ is an A-refinement of $(x_{i},y_{i})$, for  all $i<\beta$;
	\item $(x_{\lambda},y_{\lambda})= \bigor_\leqp \{(x_i,y_i)\mid i<\lambda\}$
	      for each limit ordinal $\lambda\leq\beta$.
\end{itemize}
A well-founded induction is \emph{terminal} if its limit $(x_\beta,y_\beta)$ has no strict $A$-refinements.
% \end{definition}
% 
For a given approximator $A$, there are many different terminal well-founded inductions of $A$.
\citet{lpnmr/DeneckerV07} showed that they all have the same limit, which equals the $A$-well-founded fixpoint of $O$. 
\citet{lpnmr/DeneckerV07} also showed how to obtain maximally precise unfoundedness refinements. %\newpage %TODO REMOVE NEWPAGE IF NEEDED
\begin{proposition}[\citeauthor{lpnmr/DeneckerV07},~\citeyear{lpnmr/DeneckerV07}]\label{prop:biggestufs}
Let $A$ be an approximator of $O$ and $(x,y)\in L^2$. 
Let $S_A^x$ be the operator on $L$ that maps every $y'$ to $A(x,y')_2$.
This operator is monotone. 
The smallest $y'$ such that $(x,y')$ is an unfoundedness refinement of $(x,y)$ is given by 
$y'=\lfp (S_A^x)$.
 \end{proposition}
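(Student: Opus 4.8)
The statement packs three claims together: that $S_A^x$ is monotone (so that $\lfp(S_A^x)$ makes sense), that $\bigl(x,\lfp(S_A^x)\bigr)$ is itself an unfoundedness refinement of $(x,y)$, and that it is the $\leq$-least one. I would prove them in that order. For monotonicity, note that if $y_1'\leq y_2'$ in $L$ then $(x,y_2')\leqp(x,y_1')$ --- the first components agree, and by definition of the precision order the second component only has to decrease --- so $A(x,y_2')\leqp A(x,y_1')$ by $\leqp$-monotonicity of $A$, and reading off second projections gives $S_A^x(y_1')=A(x,y_1')_2\leq A(x,y_2')_2=S_A^x(y_2')$. Since $\langle L,\leq\rangle$ is a complete lattice, Knaster--Tarski then gives that $\lfp(S_A^x)$ exists and is the $\leq$-least pre-fixpoint, i.e.\ $\bigand\{z\in L\mid S_A^x(z)\leq z\}$; I use this characterisation for the rest. (Since $A$ was assumed total only on $L^c$, one should really read $S_A^x$ as acting on $[x,\top]$; this is harmless, as only values $y'\geq x$ matter.)

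\textbf{$\lfp(S_A^x)$ is a lower bound.} Let $(x,\hat y)$ be any unfoundedness refinement of $(x,y)$. Condition (ii) says $A(x,\hat y)_2\leq\hat y$, i.e.\ $S_A^x(\hat y)\leq\hat y$, so $\hat y$ is a pre-fixpoint of $S_A^x$ and hence $\lfp(S_A^x)\leq\hat y$. In particular, provided at least one unfoundedness refinement exists (which is exactly what is presupposed by ``the smallest such $y'$''), we get $\lfp(S_A^x)\leq\hat y\leq y$. In the intended setting this comes for free in a stronger form: every pair $(x,y)$ occurring in a well-founded induction of $A$ satisfies $A(x,y)_2\leq y$ --- a short transfinite induction, using $\leqp$-monotonicity of $A$, shows this is preserved by both kinds of refinement and at limits --- so $y$ is itself a pre-fixpoint of $S_A^x$ and $\lfp(S_A^x)\leq y$ outright.

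\textbf{$\bigl(x,\lfp(S_A^x)\bigr)$ is an unfoundedness refinement, and it is least.} Writing $y'=\lfp(S_A^x)$: its first component is $x$, as condition (ii) requires; the fixpoint identity gives $A(x,y')_2=S_A^x(y')=y'$, so $A(x,y')_2\leq y'$; and $y'\leq y$ by the previous paragraph. Hence $(x,y')$ is an unfoundedness refinement of $(x,y)$, and by the lower-bound property it is the $\leq$-least second component among all of them --- which is the assertion.

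\textbf{Where the work is.} There is no genuinely hard step --- it is all $\leqp$-monotonicity bookkeeping together with Knaster--Tarski. The one point needing care is $\lfp(S_A^x)\leq y$, which hinges on $(x,y)$ actually admitting some unfoundedness refinement (equivalently, in the intended context, on the invariant $A(x,y)_2\leq y$ of well-founded inductions); a secondary, purely cosmetic wrinkle is reconciling ``operator on $L$'' with the convention that approximators act only on $L^c$.
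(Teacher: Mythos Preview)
The paper does not actually prove this proposition: it is quoted from \citet{lpnmr/DeneckerV07} in the preliminaries section and used as a black box thereafter, so there is no ``paper's own proof'' to compare against. Your argument is the standard one and is correct: monotonicity of $S_A^x$ follows from the $\leqp$-monotonicity of $A$ exactly as you write, and the Knaster--Tarski characterisation of $\lfp(S_A^x)$ as the least pre-fixpoint immediately gives both that every unfoundedness-refinement second component dominates $\lfp(S_A^x)$ and that $\bigl(x,\lfp(S_A^x)\bigr)$ is itself such a refinement.

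Your care about the side condition $\lfp(S_A^x)\leq y$ is well placed. As stated, the proposition is phrased for arbitrary $(x,y)\in L^2$, where it is not automatic that \emph{any} unfoundedness refinement exists; you resolve this correctly by noting that the claim is implicitly conditional on at least one existing (equivalently, on $y$ being a pre-fixpoint of $S_A^x$), and that in the paper's intended use---pairs $(x,y)$ arising along a well-founded induction---the invariant $A(x,y)_2\leq y$ guarantees this. Your remark about restricting $S_A^x$ to $[x,\top]$ to stay within $L^c$ is likewise a fair cosmetic point; since $A$ is assumed symmetric and internal in $L^c$, nothing breaks.
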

%Furthermore, if $A$ is symmetric, the $A$-well-founded fixpoint of $O$ (and in fact, every tuple in a well-founded induction of $A$) is consistent. 

% In general, a lattice operator $O:L\ra L$ has a family of approximators of different precision. 
% \citeauthor{DeneckerMT04} \shortcite{DeneckerMT04} showed that there exists a most precise approximator, $U_O$, called the ultimate approximator of $O$. 
% This operator is defined by $U_O: L^2\to L^2: (x,y)\mapsto (\bigand O([x,y]), \bigor O([x,y]))$. 
% Semantics defined using the ultimate approximator have as advantage that they only depend on $O$ since the approximator can be derived from $O$. It was shown that for any approximator $A$, all  $A$-stable fixpoints are $U_O$-stable fixpoints, and  the $U_O$-well-founded fixpoint is always more precise than the $A$-well-founded fixpoint.  
% We refer to $U_O$-stable fixpoints as ultimate stable fixpoints of $O$ and to the $U_O$-well-founded fixpoint as the ultimate well-founded fixpoint of $O$.
% In this paper, we will focus only on ultimate approximations. More specifically, we will show that for auto-epistemic logic, 
% whose semantics is defined in terms of Approximation Fixpoint Theory, the ultimate approximator is not precise enough!
% We will develop an alternative fixpoint theory that is more precise than the current one, and show that with this theory, we obtain several desirable properties in logics with a fixpoint semantics.

\subsection{Logic Programming} \label{ss:lp}

% We repeat some results below.
In this paper, we restrict our attention to propositional logic programs.
However, AFT has been applied in a much broader context \cite{DeneckerMT00,tplp/PelovDB07,lpnmr/AnticEF13} and our results apply in these richer settings as well. 
%can also be applied in these extensions of logic programming.

Let \voc be an alphabet, i.e., a collection of symbols  called \emph{atoms}.
 A \emph{literal} is an atom $p$ or its negation $\lnot p$.  
A logic program $\mathcal{P}$ is a set of \emph{rules} $r$ of the form 
$h\lrule l_1\land l_2\land \dots \land l_n$, where
$h$ is an atom called the \emph{head} of $r$, denoted $\mathit{head}(r)$, and the $l_i$ are literals. The formula $l_1\land l_2\land \dots \land l_n$ is the \emph{body} of $r$, denoted $\mathit{body}(r)$. A rule $r=\forall \XXX: h\lrule\varphi$ is, as usual, a shorthand for the \emph{grounding of $r$}, the collection of rules obtained by substituting the variables \XXX by elements from a given domain. 
If $p\in \voc$, the formula $\varphi_p$ is $\bigvee_{r\in \PP\land \mathit{head}(r)=p} \mathit{body}(r)$.
An interpretation $\struct$ of the alphabet \voc is an element of $2^\voc$, i.e., a subset of $\voc$.  The set of interpretations $2^\voc$ forms a lattice equipped with the order $\subseteq$. 
The truth value (\ltrue or \lfalse) of a propositional formula $\varphi$ in a structure $\struct$, denoted $\varphi^\struct$ is defined as usual.
With a logic program \PP, we associate an immediate consequence operator \cite{jacm/EmdenK76} $T_\PP$ mapping structure $\struct$ to 
$T_\PP(\struct) = \{p\mid \varphi_p^\struct=\ltrue\}$.

% We now repeat how well-known semantics of logic programs can be characterised in terms of AFT.
In the context of logic programming, elements of the bilattice $\left(2^\voc\right)^2$ are four-valued interpretations, pairs $\pstruct= (\struct_1,\struct_2)$ of interpretations.
A four-valued interpretation maps atoms $p\in \voc$ to tuples of two truth values $(p^{\struct_1}, p^{\struct_2})$. Such tuples are often identified with four-valued truth values (true ($\ltrue$), false ($\lfalse$), unknown ($\lunkn$) and inconsistent ($\lincon$)). Intuitively, $p^{\struct_1}$ represents whether $p$ is true, and $p^{\struct_2}$ whether $p$ is possible, i.e., not false. Thus, the following correspondence holds $\ltrue=(\ltrue,\ltrue), \lfalse=(\lfalse,\lfalse), \lunkn=(\lfalse,\ltrue)$ (and $\lincon=(\ltrue,\lfalse)$).
% Let $p$ be an atom and $\struct= (\struct_1,\struct_2)$ an element of the bilattice $\I_\voc^2$. 
% The value of $p$ in \struct, which we denote $p^\struct$, is either true (\ltrue), false (\lfalse), unknown (\lunkn) or inconsistent (\lincon):
% \[
% 	p^\struct=\left\{\begin{matrix}
% 	                 	\ltrue & \text{ if } p\in \struct_1\text{ and } p\in \struct_2\\
% 				\lunkn & \text{ if } p\in \struct_1\text{ and } p\not\in \struct_2\\
% 				\lfalse & \text{ if } p\not\in \struct_1\text{ and } p\not\in \struct_2\\
% 				\lincon & \text{ if } p\not\in \struct_1\text{ and } p\in \struct_2
% 	                 \end{matrix}
% \right.
% \]
The pair $(\struct_1,\struct_2)$ approximates all interpretations $\struct'$ with $\struct_1\subseteq \struct'\subseteq \struct_2$.
We are mostly concerned with consistent (also called partial) interpretations: tuples $(\struct_1,\struct_2)$ with $\struct_1\subseteq\struct_2$, i.e., interpretations that map no atoms to $\lincon$. 
% For such an interpretation, the atoms in $\struct_1$ are \emph{true} (\ltrue) in $\pstruct$, the atoms in $\struct_2\setminus\struct_1$ are \emph{unknown} (\lunkn) in $\pstruct$ and the other atoms are \emph{false} (\lfalse) in $\pstruct$. 
If $\pstruct$ is a partial interpretation, and $\varphi$ a formula, we write $\varphi^\pstruct$ for the standard three-valued valuation based on Kleene's truth tables \cite{Kleene38}.
We often identify interpretation $I$ with the partial interpretation $(I,I)$.

% Several approximators have been defined for logic programs. 
The most common approximator for logic programs is Fitting's (\citeyear{tcs/Fitting02}) immediate consequence operator $\Psi_\PP$ , a 
generalisation of $T_\PP$ to partial interpretations: 
\begin{align*}
    \Psi_\PP(\pstruct)_1 &=\{a\in \voc\mid \exists r\in \PP: body(r)^\pstruct=\ltrue \land head(r)=a\},\\
    \Psi_\PP(\pstruct)_2 &=\{a\in \voc\mid\exists r\in \PP:  body(r)^\pstruct\neq \lfalse \land head(r)=a\}
   \end{align*}
\citet{DeneckerMT00} showed that the $\Psi_\PP$-well-founded fixpoint of $T_\PP$ is the well-founded model of $\PP$ \cite{GelderRS91} and that $\Psi_\PP$-stable fixpoints are exactly the stable models of $\PP$ \cite{iclp/GelfondL88}.

\paragraph{Parametrised Logic Programs}
We briefly recall the parametrised well-founded semantics. This semantics has been implicitly present in the literature for a long time, by assigning a meaning to an \emph{intensional} database. 
We follow the formalisation
by \citet{lpnmr/DeneckerV07}.
For \emph{parametrised logic programs}, the alphabet $\voc$ is partitioned into a set $\voc_p$ of parameter symbols and a set $\voc_d$ of defined symbols. 
Only defined symbols occur in heads of rules. 
Given a $\voc_p$-interpretation $\struct$, $\PP$ defines an immediate consequence operator $T_\PP^\struct:2^{\voc_d}\to 2^{\voc_d}$ equal to $T_\PP$ except that the value of atoms in $\voc_p$ is fixed to their value in $\struct$. 
Similarly, Fitting's immediate consequence operator $\Psi_\PP^\struct$ induces an operator on $(2^{\voc_d})^2$.
$J$ is a \emph{model}\footnote{Note that this definition of model differs from the traditional definition of model of a logic program. To emphasise this difference, we use $J\modelswfm\PP$ to refer to the parametrised well-founded semantics and $J \models\theory$ for the satisfaction relation of propositional logic.} of $\PP$  under the parametrised well-founded semantics (denoted $J\modelswfm \PP$) if $J\cap \voc_d$ is the $\Psi_\PP^{J\cap \voc_p}$-well-founded fixpoint of $T_\PP^{J\cap \voc_p}$. %$\PP_{I\cap \voc_p}$.
By adding a probability distribution over the parameter symbols, we obtain the ProbLog language~\cite{tplp/FierensBRSGTJR15}.

% \todo{hier kunnen we evt relateren aan ``knowledge compilation'': zoek propositionele theory zodat $I\models T$ iff $I$ is a parametrised wfm van \PP.}
% \guy{lijkt mij hier niet de moeite}

% Contrary to classical stable and well-founded semantics, their ultimate counterparts have the nice property
% that they are insensitive to equivalence preserving rewritings of the bodies of rules.
% If two logic programs \PP and $\PP'$ have the same immediate consequence operator, then their ultimate stable (respectively ultimate well-founded) models are the same. 
% For example consider programs $\PP=\{p\lrule p\lor \lnot p\}$ and $\PP'=\{p.\}$. 
% Even though the body of the rule defining $p$ in \PP is a tautology, $\{p\}$ is not a stable model of $\PP$ (while it is a stable model of $\PP'$). 
% However, the ultimate stable semantics treats these two programs identically. 
% However, this property comes at a cost. 
% \citeauthor{DeneckerMT04} \shortcite{DeneckerMT04} showed that deciding whether \PP has an ultimate stable model is $\Sigma^2_P$-complete, while that same task is only $\Sigma^1_P$-complete for classical stable models. 

\section{Algebraical Theory}\label{sec:algebraical}
In this section we develop the algebraical foundations of our techniques. 
We follow the intuitions presented in the introduction: we define one operator that ``summarises'' an entire family operators (these will be immediate consequence operators for different interpretations of the parameter symbols). 
We study the relationship between the well-founded fixpoint of the summarising operator and the original operators. 
Before formally introducing \emph{parametrisations}, we focus on a simpler situation: we show that surjective lattice morphisms preserve the well-founded fixpoint.

\subsection{Surjective Lattice Morphisms}

% \begin{definition}\label{def:respects:O}
% 
% \end{definition}

\thmwithproof{def:respects:O}{Definition-Proposition}{defprop}{
 Let $O: L\to L$ be an operator and $f:L\to K$ a lattice morphism.
We say that $O$ \emph{respects} $f$ if for every $x,y\in L$ with $f(x)=f(y)$, it holds that $f(O(x)) = f(O(y))$.

If $f$ is surjective and 
$O$ respects $f$, then there exists a unique operator $O_f:K\to K$ with $O_f\circ f = f\circ O$, which we call the \emph{projection of $O$ on $K$}.
}{
We prove the existence and uniqueness of $O_f$. 

Choose $x\in K$. Since $f$ is surjective, there is a $x'\in L$ with $f(x')=x$. We know that $O_f$ must map $x$ to $f(O(x'))$, hence uniqueness follows. 
	Furthermore, this mapping is well-defined (independent of the choice of $x'$) since $O$ respects $f$.}

If $f:L\to K$ is a lattice morphism, $f^2:L^2\to K^2: (x,y)\mapsto (f(x),f(y))$ is a lattice morphism from the bilattice $L^2$ to the bilattice $K^2$. 
% We define similar, derived operators on $K$ for approximators.

\begin{definition}
Let $A: L^2\to L^2$ be an approximator and $f:L\to K$ a lattice morphism.
We say that $A$ \emph{respects} $f$ if $A$ respects $f^2$ in the sense of Definition \ref{def:respects:O}. 
Furthermore, if $f$ is surjective, we define the \emph{projection} of $A$ on $K$ as the unique operator $A_f:K^2\to K^2$ with $A_f\circ f^2=f^2\circ A$.
\end{definition}

\begin{wrapfigure}{R}{0.45\textwidth}
%  \vspace{-5pt}
\centering
 $
 \xymatrix@C=6mm@R=6mm{
 \langle L,\leq\rangle \ar@{->>}[r]^f \ar@{~>}[d] \POS!R(-.7)\ar@(ld,lu)^O    & \langle K,\leq \rangle \ar@{~>}[d]  \POS!R(.7)\ar@(rd,ru)_{O_f}\\
%  & & &
 \langle L^2,\leqp\rangle\ar@{->>}[r]^{f^2}  \POS!R(-.7)\ar@(ld,lu)^A  &  \langle K^2,\leqp \rangle \POS!R(.7)\ar@(rd,ru)_{A_f}
} 
$
\caption{Overview of the operators}\label{fig:operators}
\vspace{-5pt}
\end{wrapfigure}
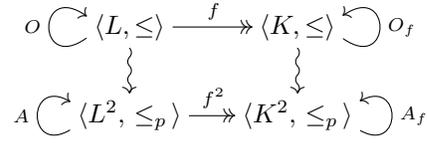
Below, we assume that $f:L\to K$ is a surjective lattice morphism, that 
% 	Suppose $L$ is a parametrisation of $K$ through $(f_i)_{i\in I}$. 
	$O: L\to L$ is an operator and $A: L^2 \to L^2$ an approximator of $O$ such that both $O$ and $A$ respect $f$ (see Figure \ref{fig:operators}). 
Intuitively elements of $L$ can be thought of as symbolic representations of interpretations, while the elements of $K$ are classical interpretations.
%  shows an overview of these operators.
% \begin{figure}[htbf]
% 
% 
% \caption{An overview of the lattices and operators used in this section;  }
% \label{fig:operators}
% \end{figure}

The following proposition explicates the relationship between well-founded inductions in $L$ and in $K$. This proposition immediately leads to a relationship between the $A$-well-founded model of $O$ and the $A_f$-well-founded model of $O_f$. 
\inproofappendix{
\begin{proposition}\label{prop:refinement}
% \thmwithproof{prop:refinement}{Proposition}{proposition}{
% 	Suppose $f:L\to K$ is a surjective lattice morphism. Let 
% 	Suppose $L$ is a parametrisation of $K$ through $(f_i)_{i\in I}$. 
% 	Let $O: L\to L$ be an operator and $A: L^2 \to L^2$ an approximator of $O$ such that both respect $f$. 
	If $(x',y')$ is an $A$-refinement of $(x,y)$, then $(f(x'),f(y'))$ is an $A_f$-refinement of $(f(x),f(y))$.
% \begin{enumerate}
% 	\item $(x',y')$ is an application $A$-refinement of $(x,y)$ if and only if for every $i$, $(f_i(x'),f_i(y'))$ is an application $A_{f_i}$-refinement of $(f_i(x),f_i(y))$ ,
% 	\item $(x',y')$ is an unfoundedness $A$-refinement of $(x,y)$ if and only if for every $i$, $(f_i(x'),f_i(y'))$ is an unfoundedness $A_{f_i}$-refinement of $(f_i(x),f_i(y))$ .
% \end{enumerate}
% }{
\end{proposition}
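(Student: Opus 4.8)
The plan is to proceed by a direct case analysis following the definition of $A$-refinement, using the two key facts at our disposal: that $f$ (hence $f^2$) is a lattice morphism, so it preserves $\leq$ (and $\leqp$) as well as least upper bounds and greatest lower bounds, and that $A$ respects $f$, which gives the intertwining identity $f^2 \circ A = A_f \circ f^2$. The statement asserts that an $A$-refinement $(x',y')$ of $(x,y)$ is mapped by $f^2$ to an $A_f$-refinement of $f^2(x,y) = (f(x),f(y))$, so there are exactly two cases to handle, matching clauses \textit{(i)} and \textit{(ii)} of the definition of $A$-refinement.

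First I would handle the application-refinement case \textit{(i)}: $(x,y) \leqp (x',y') \leqp A(x,y)$. Since $f^2$ is a lattice morphism it is $\leqp$-monotone (a morphism preserves the order because $a \leq b$ iff $a \vee b = b$, and $f$ preserves $\vee$; the same argument lifts to $\leqp$ on the bilattice, or one simply notes $\leqp$ is the product/opposite-product order built from $\leq$). Applying $f^2$ to the chain yields $f^2(x,y) \leqp f^2(x',y') \leqp f^2(A(x,y)) = A_f(f^2(x,y))$, where the last equality is precisely the respect condition $f^2 \circ A = A_f \circ f^2$. That is exactly clause \textit{(i)} for $A_f$ at $(f(x),f(y))$, so $(f(x'),f(y'))$ is an $A_f$-refinement.

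For the unfoundedness-refinement case \textit{(ii)}: $x' = x$ and $A(x,y')_2 \leq y' \leq y$. Here $f(x') = f(x)$ is immediate, giving the first required condition for $A_f$. Applying $f$ to $y' \leq y$ gives $f(y') \leq f(y)$ by monotonicity. For the remaining inequality, I would apply $f$ to $A(x,y')_2 \leq y'$ to get $f(A(x,y')_2) \leq f(y')$, and then I must rewrite $f(A(x,y')_2)$ as $A_f(f(x),f(y'))_2$. This follows from the respect identity applied at the pair $(x,y')$: $f^2(A(x,y')) = A_f(f^2(x,y')) = A_f(f(x),f(y'))$, and taking second projections (projections commute with $f^2$ by its componentwise definition) yields $f(A(x,y')_2) = A_f(f(x),f(y'))_2$. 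Hence $A_f(f(x),f(y'))_2 \leq f(y') \leq f(y)$, which is clause \textit{(ii)} for $A_f$, completing this case and the proof.

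The only mild subtlety — the "hard part," such as it is — is making sure the respect/projection identity is invoked at the correct argument (at $(x,y')$, not at $(x,y)$) in case \textit{(ii)}, and that projection onto a coordinate genuinely commutes with $f^2$; both are routine given the componentwise definition of $f^2$ and Definition-Proposition \ref{def:respects:O}. No fixpoint machinery or precision-monotonicity of $A$ itself is needed here; the proposition is purely about transporting the one-step refinement relation across the morphism, and everything reduces to order- and lub/glb-preservation of $f$ together with the commuting square of Figure \ref{fig:operators}.
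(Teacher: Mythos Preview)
Your proposal is correct and follows essentially the same approach as the paper: a case split on application versus unfoundedness refinements, pushing the inequalities through the lattice morphism $f$ (or $f^2$) and invoking the commuting identity $f^2\circ A = A_f\circ f^2$ at the appropriate argument. If anything, your write-up is slightly more explicit than the paper's about why $f^2$ is $\leqp$-monotone and about applying the respect identity at $(x,y')$ in case \textit{(ii)}.
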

\begin{proof}
\ignore{
OUDE VERSIE BEWIJS MET OOK DE OMGEKEERDE RICHTING IN:
1. First suppose $(x',y')$ is an application $A$-refinement of $(x,y)$. Thus \[(x,y)\leqp (x',y')\leqp A(x,y).\] From the fact that $f_i$ is a lattice morphism, it follows that $f_i^2(x,y)\leqp f_i^2(x',y')\leqp f_i^2(A(x,y))$. From the fact that $f_i$ respects $A$, we then find $f_i^2(x,y)\leqp f_i^2(x',y')\leqp A_{f_i}(f_i^2(x,y))$, hence indeed $f_i^2(x',y')$ is an application $A_{f_i}$-refinement of $f_i^2(x,y)$. 

For the other direction, if for every $i$, it holds that $f_i^2(x,y)\leqp f_i^2(x',y')\leqp A_{f_i}(f_i^2(x,y))$, then also for every $i$, it holds that $f_i^2(x,y)\leqp f_i^2(x',y')\leqp f_i^2(A(x,y))$. Since $L$ is a parametrisation of $K$, it also holds that $(x,y)\leqp (x',y')\leqp A(x,y)$.

2. The second point is analogous to the first. First suppose $(x',y')$ is an unfoundedness $A$-refinement of $(x,y)$. Thus $x'=x$ and $A(x,y')_2\leq y'\leq y$. Then also $f_i(x')=f_i(x)$ and $f_i(A(x,y')_2)\leq f_i(y')\leq f_i(y)$, thus $A_{f_i}(f_i(x),f_i(y'))_2\leq f_i(y')\leq f_i(y)$ and the result follows. 

For the reverse direction, if for every $i$, it holds that $f_i(x')=f_i(x)$ and that $A_{f_i}(f(x),f(y'))_2\leq f(y')\leq f(y)$. Then also for every $i$, $f_i(x')=f_i(x)$ and $f(A(x,y')_2)\leq f(y')\leq f(y)$. By the fact that $L$ is a parametrisation of $K$, we find that $x'=x$ and $A(x,y')_2\leq y'\leq y$.}

1. First suppose $(x',y')$ is an application $A$-refinement of $(x,y)$. Thus \[(x,y)\leqp (x',y')\leqp A(x,y).\] From the fact that $f$ is a lattice morphism, it follows that \[f^2(x,y)\leqp f^2(x',y')\leqp f^2(A(x,y)).\] From the fact that $f$ respects $A$, we then find \[f^2(x,y)\leqp f^2(x',y')\leqp A_{f}(f^2(x,y)),\] hence $f^2(x',y')$ is an application $A_{f}$-refinement of $f^2(x,y)$. 

\ 

\noindent
2. The second direction is analogous to the first. Suppose $(x',y')$ is an unfoundedness $A$-refinement of $(x,y)$. Thus $x'=x$ and \[A(x,y')_2\leq y'\leq y.\] Then also $f(x')=f(x)$ and \[f(A(x,y')_2)\leq f(y')\leq f(y),\] thus \[A_{f}(f(x),f(y'))_2\leq f(y')\leq f(y)\] and the result follows. 
}

\inproofappendix{\begin{lemma}\label{lem:monotone:mapping}
% Suppose $f:L\to K$ is a surjective lattice morphism and $O:L\to L$ is an operator that respects $f$. 
If $O$ and $O_f$ are monotone, then $f(\lfp(O)) = \lfp(O_{f})$. 
\end{lemma}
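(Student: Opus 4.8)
The plan is to establish the two inequalities $\lfp(O_f) \leq f(\lfp(O))$ and $f(\lfp(O)) \leq \lfp(O_f)$ separately.

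For the first inequality, I would simply check that $f(\lfp(O))$ is a fixpoint of $O_f$. Indeed, since the defining property of the projection gives $O_f \circ f = f \circ O$, we have $O_f(f(\lfp(O))) = f(O(\lfp(O))) = f(\lfp(O))$. As $O_f$ is monotone, it has a least fixpoint $\lfp(O_f)$, and hence $\lfp(O_f) \leq f(\lfp(O))$.

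For the converse, I would use the constructive (transfinite Kleene) characterisation of the least fixpoint of a monotone operator on a complete lattice, which is already in scope in this paper. Let $(x_i)_{i\leq\beta}$ be the increasing transfinite sequence with $x_0 = \bot$, $x_{i+1} = O(x_i)$, and $x_\lambda = \bigvee_{i<\lambda} x_i$ at limit ordinals $\lambda$; then $x_\beta = \lfp(O)$ for some ordinal $\beta$. I would prove by transfinite induction on $i$ that $f(x_i) \leq \lfp(O_f)$. The base case uses that a lattice morphism preserves the empty supremum, so $f(\bot) = f(\bigvee\emptyset) = \bigvee\emptyset = \bot \leq \lfp(O_f)$. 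For a successor ordinal, $f(x_{i+1}) = f(O(x_i)) = O_f(f(x_i)) \leq O_f(\lfp(O_f)) = \lfp(O_f)$, combining the induction hypothesis with monotonicity of $O_f$ and the fact that $\lfp(O_f)$ is a fixpoint. For a limit ordinal $\lambda$, $f(x_\lambda) = f\bigl(\bigvee_{i<\lambda} x_i\bigr) = \bigvee_{i<\lambda} f(x_i) \leq \lfp(O_f)$, since $f$ preserves least upper bounds and each term is below $\lfp(O_f)$ by the induction hypothesis. Instantiating at $i = \beta$ yields $f(\lfp(O)) \leq \lfp(O_f)$, and together with the first inequality this gives the claimed equality.

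I do not expect a genuine obstacle here. The only points requiring a little care are that a lattice morphism preserves the bottom element (as the join of the empty set) and preserves the transfinite joins appearing along the iteration chain — both immediate from the definition of lattice morphism — and that the monotonicity hypotheses on $O$ and $O_f$ are what legitimise both the Kleene construction of the least fixpoints and the monotonicity step in the successor case. One could additionally remark that monotonicity of $O_f$ actually follows from surjectivity of $f$ and monotonicity of $O$ (given $a\leq b$ in $K$, lift them to $x,y\in L$ with $f(x)=a$, $f(y)=b$, and replace $x$ by $x\wedge y\leq y$, which still maps to $a$), so that hypothesis in the statement is only for convenience; but this observation is not needed for the proof above.
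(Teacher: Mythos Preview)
Your proof is correct and uses essentially the same ingredients as the paper: the Kleene iteration $\bot, O(\bot), O^2(\bot),\dots$, the commutation $O_f\circ f=f\circ O$, and the fact that a lattice morphism preserves $\bot$ and joins. The only cosmetic difference is that the paper proves the stronger pointwise equality $f(O^n(\bot))=O_f^n(\bot_K)$ for every ordinal $n$ and reads off the result directly, whereas you split into two inequalities (one via a fixpoint argument, one via the chain); both are fine.
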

\begin{proof}
The least fixpoint of $O$ is the limit of the sequence $\bot \to O(\bot) \to O(O(\bot))\to \dots$. 
It follows immediately from the definition of $O_f$ that for every ordinal $n$, $f(O^n(\bot)) = O_f^n(f(\bot)) = O_f^n(\bot_K)$, hence the result follows. 
\end{proof}}

% Using Proposition \ref{prop:refinement}, we find the following relationship between well-founded inductions of $A$ and well-founded inductions of $A_{f}$, which immediately induces a relation between the $A$-well-founded model of $O$ and the $A_f$-well-founded model of $O_f$.
\thmwithproof{prop:inductions}{Proposition}{proposition}{
% Suppose $f:L\to K$ is a surjective lattice morphism. Let 
% 	Suppose $L$ is a parametrisation of $K$ through $(f_i)_{i\in I}$. 
% 	Let $O: L\to L$ be an operator and $A: L^2 \to L^2$ an approximator of $O$ such that both respect $f$. 
% 	
% 	Suppose $L$ is a parametrisation of $K$ through $(f_i)_{i\in I}$. Let $O: L\to L$ be an operator and $A: L^2 \to L^2$ an approximator of $O$ such that both $O$ and $A$ respect the $f_i$.  
	If $(x_j,y_j)_{j\leq \alpha}$ is a well-founded induction of $A$, then $(f(x_j),f(y_j))_{j\leq \alpha}$ is a well-founded induction of~$A_{f}$. If $(x_j,y_j)_{j\leq \alpha}$ is terminal, then so is $(f(x_j),f(y_j))_{j\leq \alpha}$. % is terminal as well. 
}{
	The first claim follows directly (by induction) from Proposition \ref{prop:refinement}.

	For the second claim, all that is left to show is that if there are no strict $A$-refinements of $(x_\alpha,y_\alpha)$, then there are also no strict $A_{f}$-refinements of $(f(x_\alpha),f(y_\alpha))$. 
 
 First of all, since $(x_\alpha,y_\alpha)$ is a fixpoint of $A$, it also follows for every $i$ that  $A_{f}(f(x_\alpha),f(y_\alpha)) = f^2(A(x_\alpha,y_\alpha))= (f(x_\alpha),f(y_\alpha))$. Thus, there are no strict application refinements of $A_{f}$ either. 
 
 Since there are no unfoundedness refinements of $(x_\alpha,y_\alpha)$, Proposition \ref{prop:biggestufs} yields that $y_\alpha =\lfp S_A^x$.
 It is easy to see that for every $i$, the operator $f\circ S_A^x = S_{A_{f}}^{f(x)}\circ f$. Hence, Lemma \ref{lem:monotone:mapping} (for the operator $S_A^x$) guarantees that $f(y_\alpha) = f(\lfp S_A^x) = \lfp S_{A_{f}}^{f(x)}$. Thus, using Proposition \ref{prop:biggestufs}  we find that there is no strict unfoundedness refinement of $(f(x_\alpha),f(y_\alpha))$. 

}

% Proposition \ref{prop:inductions} guarantees that all well-founded inductions of $A$ map onto well-founded inductions of $A_{f}$. 
% It does not guarantee that \emph{terminal} well-founded inductions map onto terminal well-founded inductions. 
% Luckily, also this holds. 
% In this sense, the result from Theorem \ref{thm:main} is still weak. 
% The rest of this subsection is devoted to strengthening this result.

\thmwithproof{thm:main}{Theorem}{theorem}{
% \begin{theorem}\label{thm:main}
%  Suppose $L$ is a parametrisation of $K$ through $(f_i)_{i\in I}$. Let $O: L\to L$ be an operator and $A: L^2 \to L^2$ an approximator of $O$ such that both $O$ and $A$ respect the $f_i$.  
 If $(x,y)$ is the $A$-well-founded fixpoint of $O$, then, $(f(x),f(y))$ is the $A_{f}$-well-founded fixpoint of $O_{f}$.
% \end{theorem}
}{
% \erasableproof{\begin{proof}
 Follows immediately from Proposition \ref{prop:inductions}.
% \end{proof}
}

\subsection{Parametrisations}\label{ssec:param}
% In this section, we define the notion of a parametrisation, and use them to summarise multiple operators on a given lattice. 

\begin{definition}
	Let $L$ and $K$ be lattices. Suppose $(f_i:L\to K)_{i\in I}$ is a family of surjective lattice morphisms. 
We call $L$ a \emph{parametrisation} of $K$ (through $(f_i)_{i\in I}$) if for every $x,y\in L$ it holds that $x\leq y$ if and only if for every $i\in I$, $f_i(x)\leq f_i(y)$.
\end{definition}

% \begin{definition}
% 	Let $L$ be a parametrisation of $K$ through $(f_i)_{i\in I}$ and suppose $(O_i, A_i)_{i\in I}$ are such that every $O_i$ is an operator on $K$ and every $A_i$ an approximator of $O_i$. 
% 	We call $O$ a \emph{parametrisation} of $(O_i)_{i\in I}$ if for every $i$, it holds that $O_{f_i}=O_i$.
% 	We call $A$ a \emph{parametrisation} of $(A_i)_{i\in I}$ if for every $i$, it holds that $A_{f_i}=A_i$.
% \end{definition}

A parametrisation $L$ of a lattice $K$ can be used to ``summarise'' multiple operators (the $O_{f_i}$) on $K$ by means of a single operator $O$ on $L$ which abstracts away certain details. 
In the next section, we use this to compute a symbolic representation of the parametrised well-founded model.

\thmwithproof{thm:twoval}{Theorem}{theorem}{
	Suppose $L$ is a parametrisation of $K$ through $(f_i)_{i\in I}$. Let $O: L\to L$ be an operator and $A$ an approximator of $O$ such that both $O$ and $A$ respect each of the $f_i$. If $(x,y)$ is the $A$-well-founded fixpoint of $O$, the following hold. 
	\begin{itemize}
	 \item[1.] For each $i$, $(f_i(x),f_i(y))$ is the $A_{f_i}$-well-founded fixpoint of $O_{f_i}$. 
	 \item[2.] If the $A_{f_i}$-well-founded fixpoint of $O_{f_i}$ is exact for every $i$, then so is the $A$-well-founded fixpoint of $O$. 
	\end{itemize}
}{
The first point immediately follows from Theorem \ref{thm:main}. 

Using the first point, we find that if the $A_{f_i}$-well-founded fixpoint of $O_{f_i}$ is exact for every $i$, then $f_i(x)= f_i(y)$ for every $i$. 
Hence the definition of parametrisation guarantees that $x=y $ as well, i.e., the $A$-well-founded fixpoint of $O$ is indeed exact. 
}

\section{Operator-Based Knowledge Compilation}\label{sec:kp-lp}

% In this section, we apply our algebraical theory to compute the parametrised well-founded model of a parametrised logic program symbolically. 
% Thus, we will compute the well-founded model of an operator that summarises the operators $T_{\PP}^{\struct} $ for all interpretations $\struct$ of $\voc_p$. 
% This allows us for example to \todo{do machine learning stuff. Count models (weighted), multiple queries thingies,...}
We assume throughout this section that \PP refers to a parametrised logic program with parameters $\voc_p$ and defined symbols $\voc_d$.
In order to apply our theory to logic programming, we will define an operator (and approximator) that summarises the immediate consequence operators of $\PP$ for all $\voc_p$-interpretations. 

Partial interpretations map defined atoms to a tuple $(t,p)$ of two-valued truth values. 
We generalise this type of interpretations: we want (partial) interpretations to be parametrised in terms of the parameters of the logic program. 
Instead of assigning a tuple $(t,p)$ of Boolean values to each atom, we will hence assign a tuple of two propositional formulas over $\voc_p$ to each atom in $\voc_d$. 

In order to avoid redundancies, we work \emph{modulo equivalence}. 
Let $\LL_{\voc_p}$ be the language of all propositional formulas over vocabulary $\voc_p$. 
If $\varphi$ is a propositional formula, we use $\bar\varphi$ to denote the equivalence class of $\varphi$, i.e., the set of propositional formulas equivalent to $\varphi$.\footnote{Notice that $\bar a$ is \emph{not} the negation of an atom $a$. We use $\lnot a$ for the negation of $a$.}
Let $\plat$ be the set of equivalence classes of elements in $\LL_{\voc_p}$. 
We define an order $\leq_{\plat}$ on $\plat$ as follows: $\bar\varphi\leq_{\plat} \bar\psi$ if $\varphi$ entails $\psi$ (in standard propositional logic). 
This order is well-defined (independent of the choice of representatives $\varphi$ and $\psi$); 
with this order, $\plat$ is a complete lattice.
Boolean operations on \plat are defined by applying them to representatives.
% Throughout this section, several new concepts are introduced; we illustrate them below, in Example \ref{ex:bigexample}

\begin{definition}\label{def:spstruct}
A \emph{symbolic interpretation} of $\voc_d$ in terms of $\voc_p$ is a mapping $\voc_d\to \plat$. 
The \emph{symbolic interpretation lattice} $\dplat$ is the set of all symbolic interpretations of $\voc_d$ in terms of $\voc_p$. The order $\leq$ on $\dplat$ is the pointwise extension of $\leq_{\plat}$.
A \emph{partial symbolic interpretation} is an element of the bilattice $(t,p)\in(\dplat)^2$ such that $t\leq p$.%
% \footnote{ since we are not interested in those. This condition can safely be omitted without influencing any of the following results.}
\end{definition}
The condition $t\leq p$ in Definition \ref{def:spstruct} excludes inconsistent interpretations. 
If $\voc_p$ is the empty vocabulary (i.e., if \PP has no parameters), then the lattice $\plat$ is $\{\bar\lfalse,\bar\ltrue\}$ with order $\bar\lfalse\leq\bar\ltrue$.
Hence, in this case, a (partial) symbolic interpretation is ``just'' a (partial) interpretation. 
As with classical interpretations, we often identify a symbolic interpretation \sstruct with the partial symbolic interpretation $(\sstruct,\sstruct)$.
% ans sometimes use the terminology \emph{two-valued} interpretation to emphasise that 

Intuitively, a (partial) symbolic interpretation summarises many different classical (partial) interpretations; when we instantiate such as (partial) symbolic interpretation with a $\voc_p$-interpretation, we obtain a unique (partial) $\voc_d$-interpretation. 
The following definition formalises this intuition. 

\begin{definition}\label{def:concretisation}
 If $\spstruct=(\sstruct_t,\sstruct_p)$ is a partial symbolic interpretation and $\struct$ is a $\voc_p$-interpretation, the \emph{concretisation} of \spstruct by \struct is the partial interpretation $\spstruct^\struct$ such that for every symbol $a\in\voc_d$ with $\sstruct_t(a) = \overline{\varphi_t}$ and $\sstruct_p(a)=\overline{\varphi_p}$, it holds that   $\spstruct^\struct(a)= (\varphi_t^\struct,\varphi_p^\struct)$.
\end{definition}
The above concept is well-defined (independent of the choice of representatives $\varphi_t$ en $\varphi_p$).
A symbolic interpretation can thus be seen as a mapping from $\voc_p$-interpretations to $\voc_d$-interpretations. 
This kind of mapping is of particular interest, since the parametrised well-founded semantics induces a similar mapping: it associates with every $\voc_p$-interpretation a $\voc_d$-interpretation, namely the $\Psi_\PP^I$-well-founded model of $T_\PP^I$. It is this relationship between $\voc_p$- and $\voc_d$-interpretations that we wish to capture in propositional logic. 
Furthermore, as explained below, it is easy to translate a symbolic interpretation into propositional logic. 

\begin{definition}
 Let \sstruct be a symbolic interpretation and $\psi_p$ a representative of $\sstruct(p)$ for each $p\in \voc_d$. 
 We call a propositional theory \theory \emph{a theory of} \sstruct if it is equivalent to 
 $\bigand_{p\in\voc_d}p\lequiv \psi_p.$
%  
%  If $I$ is a $\voc$-interpretation, we say that $I$ \emph{satisfies} $\sstruct$ (notation $I\models\sstruct$) if it satisfies  $\bigand_{p\in\voc_d}p\lequiv \psi_p.$
\end{definition}
All theories of \sstruct are equivalent. 
We sometimes abuse notation and refer to \emph{the} theory of \sstruct, denoted $\Th(\sstruct)$, to refer to any theory from this class. 
The goal now is to find a symbolic interpretation $\sstruct$ such that $\Th(\sstruct)$ is equivalent to~$\PP$. 
Our choice of representatives will depend on the target language of the compilation.   

The value of a propositional formula $\varphi$ in a partial interpretation \pstruct is an element of $\{\ltrue,\lfalse,\lunkn\}$ (or, a tuple of two Booleans) obtained by  standard three-valued valuation. This can easily be extended to symbolic interpretations, where the value of a formula in a (partial) symbolic interpretation is a tuple of two $\voc_p$ formulas. 

\begin{definition}\label{def:eval:partial}
 Let $\varphi$ be a $\voc$-formula and $\spstruct=(\sstruct_t,\sstruct_p)$ a partial symbolic interpretation. The value of $\varphi$ in \spstruct is a tuple $(\varphi_{t},\varphi_{p})\in \plat^2$ defined inductively as follows:
 \begin{itemize}
 \item $p^{(\sstruct_t,\sstruct_p)} = (\bar p,\bar p)$ if $p\in \voc_p$ and  $p^{(\sstruct_t,\sstruct_p)} = (\sstruct_t(p), \sstruct_p(p))$ if $p\in \voc_d$,
 \item $(\psi\land \xi)^{(\sstruct_t,\sstruct_p)} = (\overline{\psi_t\land \xi_t}, \overline{\psi_p\land \xi_p} )$ if $\psi^{(\sstruct_t,\sstruct_p)}= (\overline{\psi_t},\overline{\psi_p})$ and $\xi^{(\sstruct_t,\sstruct_p)}= (\overline{\xi_t},\overline{\xi_p})$
 \item $(\psi\lor  \xi)^{(\sstruct_t,\sstruct_p)} = (\overline{\psi_t\lor  \xi_t}, \overline{\psi_p\lor  \xi_p} )$ if $\psi^{(\sstruct_t,\sstruct_p)}= (\overline{\psi_t},\overline{\psi_p})$ and $\xi^{(\sstruct_t,\sstruct_p)}= (\overline{\xi_t},\overline{\xi_p})$
  \item $(\lnot \psi)^{(\sstruct_t,\sstruct_p)} = (\overline{\lnot \psi_p},\overline{\lnot \psi_t})$ if $\psi^{(\sstruct_t,\sstruct_p)}= (\overline{\psi_t},\overline{\psi_p})$.
 \end{itemize}
\end{definition}

Evaluation of formulas has some nice properties. It commutes with concretisation (Proposition \ref{prop:lattice:param}) and induces a parametrisation (Proposition \ref{prop:dplat:param}).

\thmwithproof{prop:lattice:param}{Proposition}{proposition}{
 For every formula $\varphi$ over \voc, $\spstruct\in(\dplat)^2$ and $\struct\in 2^{\voc_p}$, it holds that $\varphi^{\spstruct^\struct} = (\varphi^\spstruct)^\struct$.}{
 Trivial.
}
\thmwithproof{prop:dplat:param}{Proposition}{proposition}{
The lattice \dplat is a parametrisation of $2^{\voc_d}$ through the mappings $(\proj_\struct:\dplat\to 2^{\voc_d}:\sstruct\mapsto \sstruct^\struct)_{\struct\in 2^{\voc_p}}$.
}{
 It is clear that the mappings $\proj_\struct$ are lattice morphisms since evaluation of propositional formulas commutes with Boolean operations. 
 Now, for $\sstruct,\sstruct'\in\dplat$, it holds that $\sstruct\leq \sstruct'$ if and only if for every atom $p\in {\voc_d}$, $\sstruct(p)$ entails $\sstruct'(p)$. 
 This is equivalent to the condition that for every $p\in {\voc_d}$ and every interpretation $\struct\in 2^{\voc_d}$, $\sstruct(p)^\struct\leq\sstruct'(p)^\struct$, i.e., with the fact that for every $I$, $\proj_\struct(\sstruct)\leq \proj_\struct(\sstruct')$ which is what we needed to show.
}

Recall from Section~\ref{ss:lp} that $\varphi_p$ is the disjunction of all bodies of rules defining $p$; using this we can generalise both $T_\PP$ and $\Psi_\PP$ to a symbolic setting. 

\begin{definition}
The \emph{partial parametrised immediate consequence operator} $\ppimcons:(\dplat)^2\to(\dplat)^2$ is defined by
$
 \ppimcons(\spstruct)(p)=\varphi_p^\spstruct$   for every $p\in\voc_d$.
 
The \emph{parametrised immediate consequence operator} is the operator $\pimcons: \dplat\to\dplat$ that maps $\sstruct$ to $\pimcons(\sstruct)$, where
$\pimcons(\sstruct)(p) = \varphi_p^\sstruct$ for each $p\in \voc_d$.
\end{definition}
% It follows directly from the definition that $\ppimcons((\sstruct,\sstruct)) = (\pimcons(\sstruct),\pimcons(\sstruct))$ for every symbolic interpretation \sstruct.
%  Furthermore, Theorem \ref{thm:approx} below shows that our algebraical theory is immediately applicable using these operators. 
It deserves to be noticed that the operator $\pimcons$ almost coincides with the operator  $\mathcal{T}_{c_\PP}$  defined by \citet{Jonas} (the only difference is that we work modulo equivalence). % to each atom $p\in\voc_d$, they assign a propositional formula (an SDD) over $\voc_p$, while we assign an element of \plat (i.e., we work modulo equivalence). 
The following proposition, which follows easily from our algebraical theory, shows correctness of the methods developed by \citet{Jonas}. 

%  The previous proposition already allows us to do knowledge compilation for positive logic programs. 
 
 \thmwithproof{thm:posprog}{Theorem}{theorem}{
  If $\PP$ is a positive logic program, then \pimcons is monotone. 
  For every $\voc$-interpretation $I$, it then holds that $I\modelswfm \PP$ if and only if  $I\models \Th(\lfp(\pimcons))$.
}{  Follows immediately from the definition of the parametrised well-founded semantics combined with Lemma \ref{lem:monotone:mapping}. 
 }

\thmwithproof{thm:approx}{Theorem}{theorem}{
%prop:approx
%{prop:fitting:param}
For any parametrised logic program \PP, the following hold:
\begin{itemize}
	\item[1.] $\ppimcons$ is an approximator of $\pimcons$.
	\item[2.] For every $\voc_p$-structure $\struct$, it holds that $\Psi_{\PP}^{\struct}  \circ \proj_\struct^2 = \proj_\struct^2\circ \ppimcons$.
\end{itemize}

}{
1. It follows immediately from the definitions that for exact interpretations $\spstruct=(\sstruct,\sstruct)$, $\ppimcons$ coincides with $\pimcons$.
 $\leqp$-monotonicity follows directly from the definition of evaluation of formulas (Definition \ref{def:eval:partial}). 

\noindent 2. We find that for every $\spstruct\in (\dplat)^2$ and every $p\in 2^{\voc_d}$, 
 \begin{align*}
 \Psi_{\PP}^{\struct}  (\proj_\struct^2(\spstruct))(p) &= \Psi_{\PP}^{\struct} (\spstruct^\struct) (p)\\
 &=\varphi_p^{\spstruct^\struct}\\
 &=(\varphi_p^\spstruct)^\struct\\
 &= (\ppimcons(\spstruct)(p))^\struct\\
 &= \proj_\struct^2( \ppimcons(\spstruct)(p)),
 \end{align*}
which indeed proves our claim.
% The proof of this item is similar to the proof of Proposition \ref{prop:imcons:param}.
}

\begin{definition}
Let \PP be any parametrised logic program. The \emph{parametrised well-founded model} of \PP is the $\ppimcons$-well-founded fixpoint of $\pimcons$.
\end{definition}

Applying Theorem \ref{thm:main}, combined with Proposition \ref{prop:lattice:param} and Theorem \ref{thm:approx} yields:

\begin{theorem}\label{thm:wfm:param:lp}
% Let \PP be any parametrised logic program.
 If the parametrised well-founded model of \PP is exact, i.e., of the form $(\sstruct,\sstruct)$ for some symbolic interpretation \sstruct, then for every $\voc$-interpretation $\struct$,   it holds that $I\modelswfm \PP$ if and only if  $I\models \Th(\sstruct)$.
\end{theorem}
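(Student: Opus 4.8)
The plan is to reduce Theorem~\ref{thm:wfm:param:lp} to the already-established general results about surjective lattice morphisms, applied to the family of projections $\proj_\struct$. Recall from Proposition~\ref{prop:dplat:param} that \dplat is a parametrisation of $2^{\voc_d}$ through the maps $(\proj_\struct)_{\struct\in 2^{\voc_p}}$, and from Theorem~\ref{thm:approx} that $\ppimcons$ is an approximator of $\pimcons$ and that $\ppimcons$ respects each $\proj_\struct$ (the identity $\Psi_\PP^\struct\circ\proj_\struct^2 = \proj_\struct^2\circ\ppimcons$ says precisely that the projection of $\ppimcons$ along $\proj_\struct$ is $\Psi_\PP^\struct$, and respecting follows because the projection exists). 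One also checks that $\pimcons$ respects each $\proj_\struct$ with projection $T_\PP^\struct$, using Proposition~\ref{prop:lattice:param} on exact interpretations. So all hypotheses of Theorem~\ref{thm:main} are in place for each fixed \struct.

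First I would fix an arbitrary $\voc_p$-interpretation \struct and let $\struct_d = \struct\cap\voc_d$ denote the restriction of the given $\voc$-interpretation to the defined symbols; write $\struct_p=\struct\cap\voc_p$. Since the parametrised well-founded model of \PP is exact and equals $(\sstruct,\sstruct)$, Theorem~\ref{thm:main} (with $f=\proj_{\struct_p}$, $O=\pimcons$, $A=\ppimcons$, whose projections are $T_\PP^{\struct_p}$ and $\Psi_\PP^{\struct_p}$) gives that $(\proj_{\struct_p}(\sstruct),\proj_{\struct_p}(\sstruct)) = (\sstruct^{\struct_p},\sstruct^{\struct_p})$ is the $\Psi_\PP^{\struct_p}$-well-founded fixpoint of $T_\PP^{\struct_p}$ — in particular it is exact, so this fixpoint is just the well-founded model of the program $\PP$ parametrised by $\struct_p$. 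By the definition of the parametrised well-founded semantics, $I\modelswfm\PP$ holds if and only if $\struct_d$ equals that well-founded model, i.e. if and only if $\struct_d = \sstruct^{\struct_p}$.

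It then remains to connect $\struct_d = \sstruct^{\struct_p}$ with $I\models\Th(\sstruct)$. By definition $\Th(\sstruct)$ is equivalent to $\bigand_{p\in\voc_d} p\lequiv\psi_p$ where $\psi_p$ is a representative of $\sstruct(p)$. Now $I\models p\lequiv\psi_p$ iff $p^\struct = \psi_p^\struct$, and by Definition~\ref{def:concretisation} $\psi_p^{\struct_p} = \psi_p^\struct$ is exactly the truth value $\sstruct^{\struct_p}(p)$ (the concretisation is well-defined, independent of the choice of $\psi_p$). Hence $I\models\Th(\sstruct)$ iff for every $p\in\voc_d$, $p\in I \Leftrightarrow p\in\sstruct^{\struct_p}$, i.e. iff $\struct_d=\sstruct^{\struct_p}$, which by the previous paragraph is equivalent to $I\modelswfm\PP$. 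The main obstacle — though a mild one — is the bookkeeping around the split of $I$ into its $\voc_p$- and $\voc_d$-parts and making sure the ``identify $I$ with $(I,I)$'' conventions line up so that Theorem~\ref{thm:main} (an exact-to-exact statement) can be invoked; once that is set up cleanly, everything else is an unwinding of the definitions of $\Th$, concretisation, and the parametrised well-founded semantics.
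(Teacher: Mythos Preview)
Your proposal is correct and follows essentially the same route as the paper, which simply states that the theorem is obtained by applying Theorem~\ref{thm:main} together with Proposition~\ref{prop:lattice:param} and Theorem~\ref{thm:approx}; you have spelled out the bookkeeping (splitting $I$ into its $\voc_p$- and $\voc_d$-parts, and unwinding $\Th(\sstruct)$ via concretisation) that the paper leaves implicit. One small slip: you write ``fix an arbitrary $\voc_p$-interpretation $\struct$'' but then treat $\struct$ as a full $\voc$-interpretation (taking $\struct\cap\voc_d$); this should read ``fix an arbitrary $\voc$-interpretation $\struct$''.
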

% \bart{@Guy: was er iets onduidelijks aan deze stelling? Had hier ``herformuleer'' staan, maar weet niet meer waarom...}

\ignore{
\begin{example}\label{ex:bigexample}
	We illustrate the various concepts introduced above. Let $\PP_r$ be the logic program
% \begin{align*}
% 	
% \end{align*}
% 
\[\left\{
\begin{array}{lll}
 r(a,a). &r(b,c) \lrule e(b,c). \\
 r(a,b)\lrule e(a,b). &r(a,b)\lrule e(a,a)\land r(a,b) \\ %&&\lor(e(a,b)\land r(b,b))}\\
% r(b,c)&\lrule e(b,c).\\%\lor(e(a,a)\land r(a,b))\\ %&&\lor(e(a,b)\land r(b,b))}\\
	r(a,c)\lrule e(a,c). &r(a,c)\lrule e(a,a)\land r(a,c). &r(a,c)\lrule e(a,b)\land r(b,c)
 \end{array}
\right\}\]

with parameters $e(\cdot,\cdot)$ and defined symbols $r(\cdot,\cdot)$. 
In this case, the parametrised well-founded model of $\PP_r$ is the symbolic interpretation $\sstruct_r: \voc_d\to \plat:
% \begin{align*}
	r(a,a) \mapsto \overline{\ltrue}, \quad
	r(a,b)\mapsto  \overline{e(a,b)}, \quad
	r(b,c) \mapsto \overline{ e(b,c)}, \quad
	r(a,c)\mapsto \overline{ e(a,c) \lor (e(a,b)\land e(b,c))}.$

Notice that $\Th(\sstruct_r)$ is equivalent to $\PP_r$, in the sense that $J\models \Th(\sstruct_r)$ if and only if $J\modelswfm \PP_r$. 
For example, let $\struct$ be the $\voc_p$-interpretation $\{e(a,b), e(b,c)\}$. We know that the well-founded model of $T_{\PP_r}^I$ is $I':=\{r(a,a), r(a,b), r(b,c), r(a,c)\}$; this equals $\sstruct_r^\struct$ and $I\cup I'$ is indeed a model of $\Th(\sstruct_r)$. 

Since $\PP_r$ is positive, $\mathcal{T}_{\PP_r}$ is monotone and its least fixpoint can be computed by 
iteratively applying the operator $\mathcal{T}_{\PP_r}$ starting from the smallest symbolic interpretation; this yields the following sequence:
\begin{align*}
	\bot&: &r(a,a) &\mapsto  \bar\lfalse,&  r(a,b) &\mapsto \bar\lfalse,& r(b,c) &\mapsto  \bar\lfalse,&   r(a,c) &\mapsto  \bar\lfalse, \\
	\mathcal{T}_{\PP_r}(\bot)&: &r(a,a) &\mapsto  \bar\ltrue,& r(a,b) &\mapsto \overline{ e(a,b)},& r(b,c) &\mapsto  \overline{ e(b,c)}, &  r(a,c) &\mapsto  \overline{ e(a,c)},  \\
	\mathcal{T}_{\PP_r}^2(\bot) &= &\sstruct_r. \qquad
\end{align*}

% \[
%  \begin{array}{cccccc}
%   && \bot & \pimcons(\bot) & \pimcons(\pimcons(\bot))\\
%   \hline
%   r(a,a) &\mapsto & \bar\lfalse & \bar\ltrue& \bar\ltrue\\
%   r(a,b) &\mapsto & \bar\lfalse & \overline{ e(a,b)} & \overline{ e(a,b)} \\
%   r(b,c) &\mapsto & \bar\lfalse & \overline{ e(b,c)} &\overline{ e(b,c)} \\
%   r(a,c) &\mapsto & \bar\lfalse & \overline{ e(a,c)} &\overline{ e(a,c) \lor (e(a,b)\land e(b,c))}  
%  \end{array}
% \]
% 
% \begin{align*}
% &\{r(a,a)\mapsto\bar\lfalse, r(b,b)\mapsto\bar\lfalse, r(a,b)\mapsto\bar\lfalse, r(b,a)\mapsto\bar\lfalse\}\\
% &\to 
% \{r(a,a)\mapsto\bar\ltrue, r(b,b)\mapsto\bar\ltrue, \\&\qquad r(a,b)\mapsto\overline{ e(a,b)}, r(b,a)\mapsto\overline{ e(a,b)}\}=\sstruct\\
% &\to \sstruct
% \end{align*}
Also, $\mathcal{T}_{\PP_r}^3(\bot)=\mathcal{T}_{\PP_r}^2(\bot)$ and we find that $\lfp(\mathcal{T}_{\PP_r})= \sstruct_r$.
% \end{align*}
\end{example}
}

\newcommand{\smokes}{\m{\mathit{smokes}}}
\newcommand{\stress}{\m{\mathit{stress}}}
\newcommand{\friends}{\m{\mathit{fr}}}
\begin{example}\label{ex:smokers}
 We illustrate the various concepts introduced above on the smokers problem, a popular problem in probabilistic logic programming. 
 Consider a group of people. A person of this group smokes if he is stressed, or if he is friends with a smoker. This results in the following logic program $\PP_s$ with a domain of three people $\{a,b,c\}$:
 \[\left\{
\begin{array}{lll}
 \forall X: \smokes(X)\lrule \stress(X)  \\
 \forall X, Y: \smokes(X) \lrule \friends(X,Y) \land \smokes(Y)
 \end{array}
\right\}\]
This program has parameters $\stress(\cdot)$ and $\friends(\cdot,\cdot)$ and defined symbols $\smokes(\cdot)$. 
The parametrised well-founded model of $\PP_s$ is the symbolic interpretation $\sstruct_s: \voc_d\to \plat:$ such that 
\begin{align*}
 \sstruct_s(\smokes(a)) = &\overline{\stress(a)\lor (\stress(b) \land \friends(a,b)) \lor (\stress(c)\land \friends(a,c))} \\
			 &\overline{\lor(\stress(c) \land \friends(b,c)\land \friends(a,b))}\\
			 &\overline{\lor(\stress(b) \land \friends(c,b)\land \friends(a,c)) }
\end{align*}
and symmetrical equations hold for $\smokes(b)$ and $\smokes(c)$.

Notice that $\Th(\sstruct_s)$ is equivalent to $\PP_s$, in the sense that $J\models \Th(\sstruct_s)$ if and only if $J\modelswfm \PP_s$. 
For example, let $\struct$ be the $\voc_p$-interpretation $\{\stress(a),\friends(b,a)\}$. We know that the $\Psi_{\PP_s}^I$-well-founded fixpoint of $T_{\PP_r}^I$ is $I':=\{\smokes(a),\smokes(b)\}$; this equals $\sstruct_s^\struct$ and $I\cup I'$ is indeed a model of $\Th(\sstruct_s)$. 

Since $\PP_s$ is positive, $\mathcal{T}_{\PP_s}$ is monotone and its least fixpoint can be computed by 
iteratively applying the operator $\mathcal{T}_{\PP_s}$ starting from the smallest symbolic interpretation; this yields the following sequence (only the value of $\smokes(a)$ is explicated; for $\smokes(b)$ and $\smokes(c)$, similar equations hold):
\begin{align*}
  \bot &: &\smokes(a)&\mapsto \bar\lfalse\\
  \mathcal{T}_{\PP_s}(\bot)&: &\smokes(a)&\mapsto \overline{\stress(a)}\\
  \mathcal{T}_{\PP_s}^2(\bot)&: &\smokes(a)&\mapsto \overline{\stress(a)\lor (\stress(b) \land \friends(a,b)) \lor (\stress(c)\land \friends(a,c))}\\
  \mathcal{T}_{\PP_s}^3(\bot)&= &\sstruct_s. \qquad
\end{align*}
In Figure \ref{fig:smoker}, a circuit representation of $\Th(\sstruct_s)$ is depicted. In this circuit, the different layers correspond to different steps in the computation of the parametrised well-founded model of $\PP_s$. Figure \ref{fig:smoker} essentially contains proofs of atoms $\smokes(\cdot)$; this illustrates that the compiled theory can be used for example for abduction. 
\end{example}

\begin{figure}[htb]
  \centering
  \includegraphics{./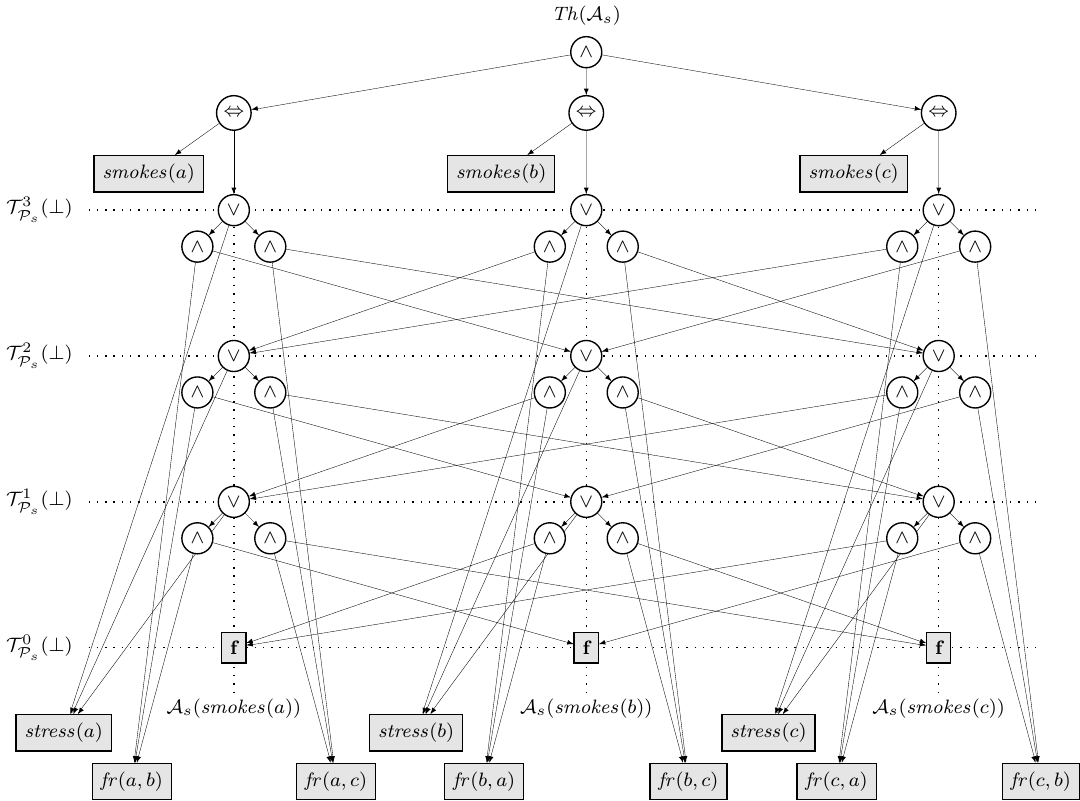}
  \caption{A circuit representation of the smokers theory $\Th(\sstruct_s)$ and the different steps in the computation of $\mathcal{T}_{\PP_s}$.}
  \label{fig:smoker}
\end{figure}

For general logic programs, $\pimcons$ is not guaranteed to be monotone and hence the parametrised well-founded model cannot be computed by iteratively applying \pimcons. Luckily, well-founded inductions provide us with a constructive way to compute it.

\newcommand{\pimconsw}{\m{\mathcal{T}_{\PP_w}}}
\newcommand{\ppimconsw}{\m{\varPsi_{\PP_w}}}
\newcommand{\pimconss}{\m{\mathcal{T}_{\PP_s}}}
\newcommand{\ppimconss}{\m{\varPsi_{\PP_s}}}

\newcommand{\turns}{\m{\mathit{turns}}}
\newcommand{\button}{\m{\mathit{button}}}

\begin{example}\label{ex:gear}
 Consider a dynamic domain in which two gear wheels are connected. Both wheels can be activated by an external force; since they are connected, whenever one wheel turns, so does the other. Both wheels are connected to a button. If an operator hits the button associated to some gear wheel, this means that he intends the state of the wheel to change (if a wheel was turning, its external force is turned off, if the wheel was standing still, its external force is activated). If the operator does not hit the button, the external force is set to the current state of the wheel. 
 Initially, both external forces are inactive. 
  This situation (limited to two time points) is modelled in the following logic program $\PP_w$ ($\turns_i(T)$ means that wheel $i$ is turning at time point $T$ and $\button_i(T)$ means that the button of wheel $i$ is pressed at time $T$):
  \[\left\{
\begin{array}{ll}
\turns_1(0) \lrule \turns_2(0)&
\turns_2(0) \lrule \turns_1(0)\\
\turns_1(1) \lrule \turns_2(1)&
\turns_2(1) \lrule \turns_1(1)\\
\turns_1(1) \lrule \turns_1(0) \land \lnot \button_1(0)&
\turns_2(1) \lrule \turns_2(0) \land \lnot \button_2(0)\\
\turns_1(1) \lrule \lnot \turns_1(0) \land \button_1(0)&
\turns_2(1) \lrule \lnot \turns_2(0) \land \button_2(0)
 \end{array}
\right\}\]
This logic program has defined symbols $\turns_\cdot(\cdot)$ and parameters $\button_\cdot(\cdot)$. 
The parametrised well-founded model of $\PP_w$ is computed by a well-founded induction of $\ppimconsw$. 
We start from the least precise partial symbolic interpretation, i.e., $\spstruct_0$ that maps every $\turns_\cdot(\cdot)$ to $(\bar\lfalse,\bar\ltrue)$. 
Since $\spstruct_0$ is a fixpoint of $\ppimconsw$, the only possible type of refinement is unfoundedness refinement, resulting in $\spstruct_1$  that maps
\begin{align*}
 \turns_1(0) &\mapsto (\bar\lfalse,\bar\lfalse) & \turns_2(0) &\mapsto (\bar\lfalse,\bar\lfalse) \\
 \turns_1(1) &\mapsto (\bar\lfalse,\bar\ltrue)  & \turns_2(1) &\mapsto (\bar\lfalse,\bar\ltrue)
\end{align*}
Application refinement then results in the partial symbolic interpretation $\spstruct_2 = \ppimconsw(\spstruct_1)$ that maps
\begin{align*}
 \turns_1(0) &\mapsto (\bar\lfalse,\bar\lfalse) & \turns_2(0) &\mapsto (\bar\lfalse,\bar\lfalse) \\
 \turns_1(1) &\mapsto (\overline{\button_1(0)},\bar\ltrue)  & \turns_2(1) &\mapsto (\overline{\button_2(0)},\bar\ltrue)
\end{align*}
Another application refinement then results in the partial symbolic interpretation $\spstruct_3 = \ppimconsw(\spstruct_2)$  that maps
\begin{align*}
 \turns_1(0) &\mapsto (\bar\lfalse,\bar\lfalse) & \turns_2(0) &\mapsto (\bar\lfalse,\bar\lfalse) \\
 \turns_1(1) &\mapsto (\overline{\button_1(0)\lor \button_2(0)},\bar\ltrue)  & \turns_2(1) &\mapsto (\overline{\button_2(0)\lor \button_1(0)},\bar\ltrue)
\end{align*}
Finally, one last unfoundedness refinement results in the symbolic interpretation $\sstruct_w$  that maps
\begin{align*}
 \turns_1(0)& \mapsto \bar \lfalse & \turns_2(0)& \mapsto \bar \lfalse \\
 \turns_1(1)&\mapsto \overline{\button_1(0)\lor \button_2(0)} & \turns_2(1)&\mapsto\overline{\button_1(0)\lor \button_2(0)}  
\end{align*}

% Iterated applications of \ppimconsi yield a symbolic partial interpretation $\mathcal{S}_1$ such that for every instantiation of $X$ and $Y$ in $\{a,b,c\}$, the following hold

In Figure A.1 in online Appendix A, a circuit representation of $\Th(\sstruct_w)$ is depicted. In this circuit, the different layers correspond to the evolution of the lower bound in different steps in the computation of the parametrised well-founded model of $\PP_w$ (unfoundedness refinements are not visualised). In Figure A.2, the circuit for this examples with time ranging from $0$ to $2$ is depicted. 
\end{example}

% \begin{figure}[htb]
%   \centering
%   \includegraphics{./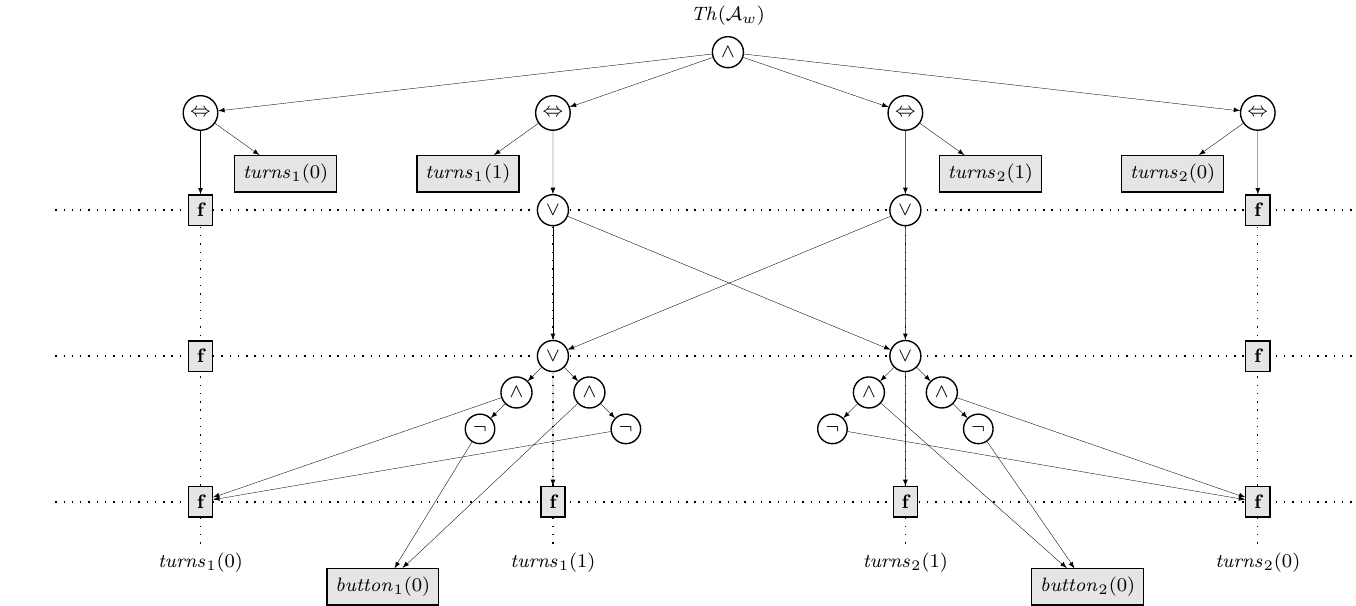}
%   \caption{A circuit representation of the gear wheel theory $\Th(\sstruct_w)$.}
%   \label{fig:wheel}
% \end{figure}

\begin{example}[Example \ref{ex:smokers} continued]\label{ex:smokers:continued}
 Well-founded inductions also work for positive logic programs. Let $\spstruct_0$ denote the least precise partial interpretation. Since $\PP_s$ is positive, it holds for every $i$ and $X$ that 
 \[\ppimconss^i(\spstruct_0)(\smokes(X)) = (\pimconss^i(\bot)(\smokes(X)), \bar\ltrue).\]
 Hence, repeated application refinements yield the partial symbolic interpretation $(\sstruct_s,\top)$. One final unfoundedness refinement then results in the parametrised well-founded model of $\PP_s$, namely $\sstruct_s$. 
\end{example}

\subsection*{Discussion}

The condition in Theorem \ref{thm:wfm:param:lp} naturally raises the question ``what happens if the parametrised well-founded model is \emph{not} exact?''. First of all, our techniques also work in this setting. Indeed, Theorem \ref{thm:twoval} (1) guarantees that instantiating the  the parametrised well-founded model of \PP with  a $\voc_p$-interpretation $\struct$ results in the $\Psi_\PP^\struct$-well-founded fixpoint of $T_\PP^\struct$. 
\begin{example}\label{ex:nontot}
 Let $\PP_{\mathit{NT}}$ be the following logic program 
 \[\left\{
\begin{array}{lllll}
 a\lrule \lnot b. & b\lrule \lnot a. &
 c\lrule \lnot b& c\lrule e. & d \lrule a \land \lnot c.
 \end{array}
\right\}\]
with parameter symbol $e$ and defined symbols $a,b,c$ and $d$. 
The parametrised well-founded model of $\PP_{\mathit{NT}}$ is then $\spstruct_{\mathit{NT}}$ such that 
\begin{align*}
\spstruct_{\mathit{NT}}(a) &= (\overline \lfalse, \overline \ltrue) & \spstruct_{\mathit{NT}}(b) &= (\overline \lfalse, \overline \ltrue) &
\spstruct_{\mathit{NT}}(c) &= (\overline e, \overline \ltrue) & \spstruct_{\mathit{NT}}(d) &= (\overline \lfalse, \overline {\lnot e}) 
\end{align*}
\end{example}

However, in this text we mainly focus on programs with an exact parametrised well-founded model.  
% This condition is not a very strong restriction.
% Indeed, 
Corollary \ref{thm:twoval} guarantees that this condition is satisfied for all logic programs in which the standard well-founded model is two-valued. 
This kind of programs is common in applications for deductive databases \cite{jcss/AbiteboulV91} and for representing inductive definitions \mycite{NaturalInductions}. 
Classes that satisfy this condition include monotone and (locally) stratified logic programs \cite{minker88/Przymusinski88}. 

This restriction is typically not satisfied by ASP programs, where stable semantics is used. 
However, it deserves to be stressed that there is a strong relationship between ASP programs and logic programs under the parametrised well-founded semantics. % \cite{DeneckerLTV12}. 
Most ASP programs, e.g., those used in ASP competitions, are so-called generate-define-test (GDT) programs.
They consist of three modules. 
A generate module  opens the search space (i.e., it introduces parameter symbols); a define module contains inductive definitions for which well-founded and stable semantics coincide (as argued by \citet{\refto{NaturalInductions}}) and a test module consist of constraints. 
\citet{DeneckerLTV12} have argued that a GDT program is the \emph{monotone conjunction} of its different modules. 
Hence, our technique can be used to compile the \emph{define} part of a GDT program. The example below illustrates that only compiling this part results in an interpretation that captures the meaning of this definition more closely, by preserving more structural information. 
\begin{example}[Example \ref{ex:nontot} continued]
 The first two rules of  $\PP_{\mathit{NT}}$ encode a choice rule for $a$ (or $b$). The define module of this program is the program 
  \[\PP_{\mathit{def}}=\left\{
\begin{array}{llll}
  b\lrule \lnot a. &
 c\lrule \lnot b& c\lrule e. & d \lrule a \land \lnot c.
 \end{array}
\right\}\]
with parameter symbols $a$ and $e$, and defined symbols $b,c$ and $d$. The parametrised well-founded model of $\PP_{\mathit{def}}$ is the symbolic interpretation $\sstruct_{\mathit{def}}$ such that 
\begin{align*}
 \sstruct_{\mathit{def}}(b) &= \overline{\lnot a}     &
 \sstruct_{\mathit{def}}(c) &= \overline{a \lor e}    &
 \sstruct_{\mathit{def}}(d) &= \overline{a \land \lnot (a\lor e)} = \overline{\lfalse}  
\end{align*}
As can be seen, the parametrised well-founded model now contains the information that $d$ is false, independent of the value of the parameter symbols (independent of the choice made in the choice rules in the original example).
\end{example}

% complemented with standard methods to compile constraints (which are propositional formulas), also works for compiling GDT programs. %Pro

% Furthermore, 
% \todo{an example?}

\ignore{

\begin{example}[Example \ref{ex:symbolicinter} continued]\label{ex:sstruct}
% \todo{notation}
% In this example, $\Th(\sstruct)$ is equivalent to $\PP$, in the sense that $J\models \Th(\sstruct)$ if and only if $J\models \PP$. \todo{de ``goede'' reviewer begreep dit niet, misschien $J\models_{\small pwfs} \PP$ gebruiken om Wwell-founded semantics te benadrukken}
% For example, let $\struct$ be the $\voc_p$-interpretation $\{e(a,b), e(b,c)\}$. We know that the well-founded model of $T_\PP^I$ is $I':=\{r(a,a), r(a,b), r(b,c), r(a,c)\}$; this equals $\sstruct^\struct$ and $I\cup I'$ is a model of $\Th(\sstruct)$. 
% In this example, for \emph{any} $\voc_p$-interpretation \struct, $\sstruct^\struct$ equals the well-founded model of $\PP_\struct$. This is not a coincidence; we will generalise this observation below.
% 	Consider the following logic program \PP
% \begin{align*}
% 	p&\lrule p.\\
% 	q&\lrule \lnot r.\\
% 	r&\lrule s.
% \end{align*}
% where $\voc_p=\{s\}$ and $\voc_d=\{p,q,r\}$. 
% Consider the symbolic interpretation $\sstruct_1=\{p\mapsto \bar\lfalse, q\mapsto\overline{\lnot s}, r\mapsto \bar s\}$. Furthermore, assume that $\struct$ is the $\voc_p$ interpretation $\{s\}$. 
% Then the concretisation of $\sstruct_1$ by \struct is the \voc-interpretation $\{r\}$.
\end{example}

% \ignore{\begin{definition}
% 	The \emph{symbolic interpretation lattice} $\dplat$ is the set of all symbolic interpretations of $\voc_d$ in terms of $\voc_p$. The order $\leq$ on $\dplat$ is the pointwise extension of $\leq_{\plat}$.
% \end{definition}}

\begin{example}[Example \ref{ex:sstruct} continued]\label{ex:sstruct:c}
Let $\sstruct'$ be the symbolic interpretation such that $\sstruct'$ interprets all atoms $r(x,y)$ as $\bar\ltrue$. Then $\sstruct\leq\sstruct'$ since every formula entails $\ltrue$.
\end{example}

\begin{example}[Example \ref{ex:symbolicinter} continued]
Again consider the logic program \PP from Example \ref{ex:symbolicinter}. Iteratively applying the operator $\pimcons$ starting from the smallest symbolic interpretation yields the following sequence:
\[
 \begin{array}{c|cccc}
  & \bot & \pimcons(\bot) & \pimcons(\pimcons(\bot))\\
  \hline
  r(a,a) & \bar\lfalse & \bar\ltrue& \bar\ltrue\\
  r(a,b) & \bar\lfalse & \overline{ e(a,b)} & \overline{ e(a,b)} \\
  r(b,c) & \bar\lfalse & \overline{ e(b,c)} &\overline{ e(b,c)} \\
  r(a,c) & \bar\lfalse & \overline{ e(a,c)} &\overline{ e(a,c) \lor (e(a,b)\land e(b,c))}  
 \end{array}
\]
% 
% \begin{align*}
% &\{r(a,a)\mapsto\bar\lfalse, r(b,b)\mapsto\bar\lfalse, r(a,b)\mapsto\bar\lfalse, r(b,a)\mapsto\bar\lfalse\}\\
% &\to 
% \{r(a,a)\mapsto\bar\ltrue, r(b,b)\mapsto\bar\ltrue, \\&\qquad r(a,b)\mapsto\overline{ e(a,b)}, r(b,a)\mapsto\overline{ e(a,b)}\}=\sstruct\\
% &\to \sstruct
% \end{align*}
Also, $\pimcons^3(\bot)=\pimcons^2(\bot)$; the operator \pimcons is monotone and we find that its least fixpoint is \sstruct.
\end{example}}

\section{Algorithms}

Based on the theory developed in the previous section, we now discuss practical algorithms for exact and approximate knowledge compilation of logic programs.

\subsection{Exact Knowledge Compilation}
\newcommand{\state}{\m{\mathfrak{S}}}
The definition of a well-founded induction provides us with a fixpoint procedure to compute the parametrised well-founded model. 
Our algorithms are parametrised by a language \LL, referred to as the \emph{target language}; this can be any representation of propositional formulas.  
We describe our algorithm, which we call $\compile(\LL)$, as a (non-deterministic) finite-state-machine.  
A \emph{state} $\state$ consists of an assignment of two formulas $\state_t(q)$ and $\state_p(q)$ in $\LL$  (over vocabulary $\voc_p$) to each atom $q\in \voc_d$. 
Hence, a state $\state$ corresponds to the partial symbolic interpretation $\spstruct_\state = (\sstruct_t,\sstruct_p)$ such that for each $q\in \voc_d$, $\sstruct_t(q) = \overline{\state_t(q)}$ and $\sstruct_p(q) = \overline{\state_p(q)}$.  
The \emph{transitions} in our finite-state-machine are exactly those tuples of states $(\state, \state')$ such that $\spstruct_{\state'}$ is a $\ppimcons$-refinement of $\spstruct_{\state}$. 

We further restrict these transitions  to \emph{maximally precise} transitions: \emph{application refinements} that refine $\spstruct$ to $\ppimcons(\spstruct)$ and \emph{unfoundedness refinements} as described in Proposition \ref{prop:biggestufs}. 
Furthermore, we propose to make the resulting finite-state-machine \emph{deterministic} by prioritising application refinements over unfoundedness refinements since they are cheaper, i.e., they only require one application of $\ppimcons$.

% 
% We call the resulting algorithm $\compile(\LL)$, parametrised by the target language \LL. It maintains a representation of a partial structure $\spstruct$ in the language $\LL$.

The final output of $\compile(\LL)$ is a theory $\Th(\sstruct)$ in \LL, where $\sstruct$ is the parametrised well-founded model of \PP.
When $\LL$ denotes Boolean circuits, each application of $\ppimcons$ adds a layer of Boolean gates over the circuits in $\spstruct_s$. When $\LL$ denotes a language with a so-called \textsc{Apply} function~\cite{aaai/BroeckD15} (e.g.,  SDDs), each application of $\ppimcons$ calls \textsc{Apply} to conjoin or disjoin circuits from $\spstruct_s$.

Figure \ref{fig:smoker} contains an example circuit for the smokers problem (Example \ref{ex:smokers}). 
The different layers in the circuit correspond to different steps in a well-founded induction (or the least fixpoint computation). Our algorithm follows the well-founded induction as described in Example \ref{ex:smokers:continued}, by prioritising application refinements over unfoundedness refinements. 
Similarly, our algorithm also follows the well-founded induction from Example \ref{ex:gear}. 
During the execution, circuits to represent the upper and lower bounds are gradually built (layer by layer). 
%  \todo{betere  naam dan \compile?}
%  \guy{added some indication of how the circuits are cosntructed, this can get lost in the theory, as we are overloading $\land$, $\lor$}

\inproofappendix{
\begin{lemma}\label{lem:nbrefinements:instantiated}
For every $\voc_p$-interpretation $I$, there are at most $|\voc_d|$ strict refinements in a well-founded induction of $\Psi_\PP^I$. 
\end{lemma}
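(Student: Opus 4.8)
The plan is to analyze what happens to the "gap" between the lower and upper bound along a well-founded induction of $\Psi_\PP^I$ on the finite lattice $(2^{\voc_d})^2$, and to argue that every \emph{strict} refinement shrinks this gap by at least one atom. Concretely, to a consistent pair $(x,y)\in (2^{\voc_d})^c$ associate the quantity $g(x,y) = |y\setminus x|$, the number of atoms that are still unknown (not yet decided true or false). Since $(x_0,y_0)=(\bot,\top)=(\emptyset,\voc_d)$, we start with $g(x_0,y_0)=|\voc_d|$, and $g$ is always a non-negative integer. The claim will follow once I show that each strict refinement strictly decreases $g$, because then there can be at most $|\voc_d|$ strict refinements in total; the intermediate (non-strict) steps and limit steps do not change the pair at all, so they do not contribute.

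The key steps are: \emph{(i)} recall the two kinds of refinement from the definition preceding Proposition \ref{prop:biggestufs}. For an \textbf{application refinement} we have $(x,y)\leqp (x',y')\leqp \Psi_\PP^I(x,y)$, so $x\subseteq x'$ and $y'\subseteq y$; if the step is strict then $x\subsetneq x'$ or $y'\subsetneq y$. Since the induction stays within consistent pairs ($x'\subseteq y'$), in either case $y'\setminus x' \subsetneq y\setminus x$ — a proper subset, because we either removed something from $y$ that was in $y\setminus x$ or added something to $x$ that was in $y\setminus x$ (using $x\subseteq y'$, which holds because $x\subseteq x'\subseteq y'$). Hence $g$ strictly decreases. \emph{(ii)} For an \textbf{unfoundedness refinement}, $x'=x$ and $\Psi_\PP^I(x,y')_2\leq y'\leq y$; strictness means $y'\subsetneq y$, so again $y'\setminus x \subsetneq y\setminus x$ and $g$ strictly decreases. \emph{(iii)} Conclude: along any well-founded induction, the value $g(x_i,y_i)$ is non-increasing over all steps and strictly decreasing at every strict refinement; since it starts at $|\voc_d|$ and stays $\geq 0$, there are at most $|\voc_d|$ strict refinements.

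The only mildly delicate point — and the one I would be most careful about — is checking that after a refinement the pair is still consistent, so that "removing an atom from $y$ or adding one to $x$" genuinely shrinks the interval rather than, say, producing an inconsistent pair where $y\setminus x$ behaves unexpectedly. This is guaranteed because $\Psi_\PP^I$ is an approximator and approximators are internal in $L^c$, and the definition of well-founded induction keeps the sequence inside $L^c$ (the application-refinement clause uses $(x',y')\leqp \Psi_\PP^I(x,y)$ with $(x',y')$ above $(x,y)$ in $\leqp$, and the unfoundedness clause fixes $x'=x\leq y'$). With consistency in hand, $g$ is a genuine $\leqp$-monotone rank function and the counting argument closes the proof.
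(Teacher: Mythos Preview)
Your proposal is correct and takes essentially the same approach as the paper. The paper's proof is a single sentence---``Every strict refinement should at least change one of the atoms in $\voc_d$ from unknown to either true or false, hence the result follows''---and your rank function $g(x,y)=|y\setminus x|$ is precisely the number of unknown atoms, so your detailed case analysis is just an explicit unpacking of that one line.
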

\begin{proof}
	Every strict refinement should at least change one of the atoms in $\voc_d$ from unknown to either true or false, hence the result follows. 
\end{proof}

\begin{lemma}\label{lem:nbrefinements:parametrised}
 Suppose $(x_i,y_i)_{i\leq \beta}$ is a well-founded induction of $\pimcons$ in which every refinement is maximally precise, i.e., either of the form $(x,y)\to \pimcons(x,y)$ or an unfoundedness refinement satisfying the condition in Proposition \ref{prop:biggestufs}. The following hold:
\begin{itemize}
	\item there are at most $|\voc_d|$ subsequent strict application refinements in $(x_i,y_i)_{i\leq \beta}$, and 
	\item if unfoundedness refinements only happen in $(x_i,y_i)_{i\leq \beta}$ when no application refinement is possible, then there are at most $|\voc_d|$ unfoundedness refinements.
% 	\item if all refinements in $(x_i,y_i)_{i\leq \beta}$ are strict, then $\beta \leq |\voc_d|^2$.
\end{itemize}
\end{lemma}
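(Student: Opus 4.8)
\emph{Approach.} I will prove both bounds by projecting the given maximal-precision well-founded induction of $\ppimcons$ to each $\voc_p$-interpretation $\struct$, invoking Lemma~\ref{lem:nbrefinements:instantiated} in the concrete lattice $2^{\voc_d}$, and then transporting the resulting \emph{uniform} bound $|\voc_d|$ back to $\dplat$ through the parametrisation (Proposition~\ref{prop:dplat:param}). Two commutation facts make the projection work: by Theorem~\ref{thm:approx}(2) and induction, $\proj_\struct^2\circ\ppimcons^{n}=(\Psi_{\PP}^{\struct})^{n}\circ\proj_\struct^2$ for all $n$; and, since $\proj_\struct\circ S_{\ppimcons}^{x}=S_{\Psi_{\PP}^{\struct}}^{\proj_\struct(x)}\circ\proj_\struct$ exactly as in the proof of Proposition~\ref{prop:inductions}, Lemma~\ref{lem:monotone:mapping} gives $\proj_\struct(\lfp S_{\ppimcons}^{x})=\lfp S_{\Psi_{\PP}^{\struct}}^{\proj_\struct(x)}$, so the maximal unfoundedness refinement of Proposition~\ref{prop:biggestufs} commutes with $\proj_\struct^2$. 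Combined with Proposition~\ref{prop:refinement}, this shows that applying $\proj_\struct^2$ to the given induction and deleting the steps that become non-strict yields a well-founded induction of $\Psi_{\PP}^{\struct}$, which by Lemma~\ref{lem:nbrefinements:instantiated} has at most $|\voc_d|$ strict refinements.

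\emph{First bullet.} A maximal block of consecutive strict application refinements has the shape $\spstruct_0\to\ppimcons(\spstruct_0)\to\dots\to\ppimcons^{m}(\spstruct_0)$ with $\spstruct_0\leqp\ppimcons(\spstruct_0)$. For a fixed $\struct$, the sequence $(\Psi_{\PP}^{\struct})^{n}(\proj_\struct^2(\spstruct_0))=\proj_\struct^2(\ppimcons^{n}(\spstruct_0))$ is $\leqp$-increasing and, as in Lemma~\ref{lem:nbrefinements:instantiated}, every strict step decides at least one atom of $\voc_d$; hence it is constant from $n=|\voc_d|$ onward, so $\proj_\struct^2(\ppimcons^{|\voc_d|}(\spstruct_0))=\proj_\struct^2(\ppimcons^{|\voc_d|+1}(\spstruct_0))$. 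Since this holds for \emph{every} $\struct$, Proposition~\ref{prop:dplat:param} forces $\ppimcons^{|\voc_d|}(\spstruct_0)=\ppimcons^{|\voc_d|+1}(\spstruct_0)$, i.e.\ $m\leq|\voc_d|$.

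\emph{Second bullet.} Under the stated priority assumption the induction decomposes into blocks of strict application refinements separated by single strict unfoundedness refinements $u_1,\dots,u_N$, each $u_k$ acting at a fixpoint of $\ppimcons$. The key claim is that for every $\struct$ the set $\{\, k : u_k\text{ is strict after projecting by }\proj_\struct^2 \,\}$ is a \emph{prefix} $\{1,\dots,m_\struct\}$ of $\{1,\dots,N\}$. Indeed, if $u_k$ projects to a non-strict $\Psi_{\PP}^{\struct}$-refinement, then by Theorem~\ref{thm:approx}(2) the projected state is a $\Psi_{\PP}^{\struct}$-fixpoint on which the maximal unfoundedness refinement is trivial, hence is $\Psi_{\PP}^{\struct}$-terminal; using the commutation facts once more, every later step of the induction then projects non-strictly (otherwise the $\Psi_{\PP}^{\struct}$-induction would proceed past a terminal pair), so in particular $u_{k+1},\dots,u_N$ do. Now $u_N$ is strict in $\dplat$, so $\proj_{\struct^{*}}^2$ makes it strict for some $\struct^{*}$; by the claim $m_{\struct^{*}}=N$, so all of $u_1,\dots,u_N$ survive as strict refinements of the projected well-founded induction of $\Psi_{\PP}^{\struct^{*}}$, whence $N\leq|\voc_d|$ by Lemma~\ref{lem:nbrefinements:instantiated}.

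\emph{Main obstacle.} The obstacle is that $\dplat$ has chains of length up to roughly $|\voc_d|\cdot 2^{|\voc_p|}$, so a direct ``each strict refinement decides an atom'' count does not give the desired bound $|\voc_d|$; the bound has to be produced concretely and transported through the parametrisation. For the first bullet this transport is immediate once all concrete projections stabilise simultaneously. For the second bullet the delicate point is that distinct $u_k$ might a priori be witnessed as strict by distinct concretisations; the prefix claim is exactly what lets a single $\struct^{*}$ witness all of them at once.
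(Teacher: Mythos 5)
Your proof is correct and follows essentially the same route as the paper's: project the maximal-precision induction through each concretisation $\proj_\struct^2$, bound the number of strict steps in the concrete lattice via Lemma~\ref{lem:nbrefinements:instantiated}, and transport the bound back through the parametrisation. In fact you make explicit two points the paper only gestures at --- that all concrete projections of an application block stabilise simultaneously (so Proposition~\ref{prop:dplat:param} forces stabilisation in $\dplat$), and the prefix/single-witness argument for the unfoundedness refinements --- so your write-up is a more detailed version of the intended argument rather than a different one.
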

\begin{proof}
For the first part, we notice that every sequence of maximal application refinements maps (by $\pi_I$) onto a sequence of maximal application refinements of $\Psi_\PP^I$. 
Furthermore, from the proof of Proposition \ref{prop:inductions}, it follows that if a $\pimcons$-refinement is strict, then at least on of the induced $\Psi_\PP^I$-refinements must be strict as well. The result now follows from Lemma \ref{lem:nbrefinements:instantiated}.

The second point is completely similar to the first. There can be at most $|\voc_d|$ strict unfoundedness refinements in any well-founded induction of $\Psi_\PP^I$. Furthermore, the condition in this point guarantees that if for some $I$, an unfoundedness refinement in the induced well-founded induction is not strict, then neither will any later unfoundedness refinements. Hence, the result follows.
\end{proof}
}

\thmwithproof{thm:complexity}{Theorem}{theorem}{
Let $\LL_{\mathit{BC}}$ be the language of Boolean circuits. The following hold:  \textit{(i)}
 $\compile(\LL_{\mathit{BC}})$ has polynomial-time complexity and \textit{(ii)}
   the size of the output circuit of $\compile(\LL_{\mathit{BC}})$  is polynomial in the size of~\PP. 
%  \end{enumerate}
}{
First, we notice that if we have a circuit representation of $\spstruct$, then the representation of $\ppimcons(\spstruct)$ consists of the same circuit with maximally three added layers since $\varphi_p$ is a DNF for every defined atom $p$ (a layer of negations, one of disjunctions and one of conjunctions). Furthermore, the size of these layers is linear in terms of the size of $\PP$. 
Similarly, the representation of an unfoundedness refinement will only be quadratically in the size of $\PP$ (quadratically since computing the smallest $y'$ is a refinement takes a linear number of applications). 

The two results now follow from Lemma \ref{lem:nbrefinements:parametrised}, which yields a polynomial upper bound on the number of refinements, and which also allows us to ignore the stop conditions (in general checking whether a fixpoint is reached is a co-NP problem, namely checking equivalence of two circuits; however, we do not need to do this since we have an upper bound on the maximal number of refinements before such a fixpoint is reached).}

In the terminology of \citet{darwiche2002knowledge}, this means that Boolean circuits are \emph{at least as succinct} as logic programs under the parametrised well-founded semantics.
With other languages, for example when \LL denotes OBDDs or SDDs, our algorithm can take exponential time, and its output can take exponential space in the size of \PP.
% 
% 
% \begin{theorem}
% Let $\LL_{\mathit{SDD}}$ be the language of sentential decision diagrams \cite{ijcai/Darwiche11}. 
%  $\compile(\LL_{\mathit{SDD}})$ has the following properties.
%  \begin{enumerate}
%   \item $\compile(\LL_{\mathit{SDD}})$ has exponential time-complexity.
%   \item The size of the parametrised well-founded model as computed by $\compile(\LL_{\mathit{SDD}})$ is exponential in the size of \PP. 
%  \end{enumerate}
% \end{theorem}
% \erasableproof{
% \begin{proof}
% 	This follows from the fact that every Boolean operation on SDDs can double the size of the SDD. 
% \end{proof}}
% 
This is not surprising given the fact these languages support many \text{(co-)NP} hard inference tasks in polynomial time.
Because they support equivalence checking (which is convenient to detect fixpoints early) and have a practically efficient \textsc{Apply} function~\cite{aaai/BroeckD15}, OBDDs and SDDs are excellent languages for use in \compile.

\subsection{Approximate Knowledge Compilation}
The above section provides us with a way to perform various types of inference on logic programs: we can compile any logic program into a target formalism suitable for inference (e.g., SDD for equivalence checking or weighted model counting, CNF for satisfiability checking, etc.). 
However, when working with large programs this approach will be infeasible, simply because compilation is too expensive. 
In this case, we often want to perform approximate knowledge compilation~\cite{selman1996knowledge}.
Well-founded inductions provide us with the means to do this.
% \begin{proposition}[\cite{lpnmr/DeneckerV07}]
% Let $A$ be an approximator of $O$ and $(x_i,y_i)_{i\leq \beta}$ a well-founded induction of $A$. Let $(x,y)$ be the $A$-well-founded model of $O$. Then for every $i\leq \beta$, it holds that 
% \[(x_i,y_i)\leqp (x,y).\]
% \todo{move to proof of next prop}
%  \end{proposition}
% Translated to the context of logic programming this yields
\thmwithproof{prop:approx:lp}{Proposition}{proposition}{
Suppose the parametrised well-founded model of \PP is $(\sstruct,\sstruct)$. 
 Let $(\sstruct_{i,1}, \sstruct_{i,2})$ be a well-founded induction of $\ppimcons$. 
 Then for every $i$, 
 $\Th(\sstruct_{i,1})\models\Th(\sstruct)\models \Th(\sstruct_{i,2}).$
}{
 \citet{lpnmr/DeneckerV07} showed that if $(x_i,y_i)_{i\leq \beta}$ is a well-founded induction of $A$ and $(x,y)$ the $A$-well-founded model of $O$, then for every $i\leq \beta$, it holds that 
\[(x_i,y_i)\leqp (x,y).\]
Our proposition immediately follows from this result.}
One application of approximate knowledge compilation is in approximate inference by weighted model counting ($\WMC$)~\cite{DBLP:journals/ai/ChaviraD08} for probabilistic logic programs~\cite{tplp/FierensBRSGTJR15}. Let $\varphi$ be a formula (query) over $\voc$ and $w$ a weight function on $\voc$. Then it follows immediately from Proposition \ref{prop:approx:lp} that 
 \begin{align*}
 \WMC(\Th(\sstruct_{i,1})\land \varphi, w)&\leq \WMC(\PP \land \varphi, w)\leq \WMC(\Th(\sstruct_{i,2})\land \varphi, w).
 \end{align*}
As $\compile(\LL)$ follows a well-founded induction, it can be stopped at any time to obtain an upper and lower bound on the weighted model count (and therefore on the probability of the query). 
In fact, Proposition \ref{prop:approx:lp} can be used to perform \emph{any} (anti)-monotonic inference task approximately.

% \section*{COMPLEXITY}
% \begin{itemize}
%  \item For \spstruct, define $|\spstruct|$ as $\max_{p\in \voc_d}|\spstruct(d)|$.
%  \item For \PP, let $c$ be the max number of conjuncts, $d$ of disjuncts.
%  \item Every application of $\ppimcons$ turns a symbolic interpretation of size $k$ into one of size at most $2^d\cdot 2^ck=2^{cd}k$.
%  \item This is large. 
%  \item Polynomial number of applications of \ppimcons, but every time on larger strutures. Don't know if I can say sensefull things here...
% \end{itemize}

% \section{Incremental Knowledge Compilation of Logic Programs}\label{sec:kp-lp-incr}
% \bart{Oorspronkelijk dacht ik hier nog een stukje te schrijven over incremental knowledge compilation. Maar ik vrees dat dat te veel plaats gaat vragen, en de zaken ook mogelijks te complex kan maken. 
% In de context van monotone zaken, is het gemakkelijk: met ``incremental'' wordt dan bedoeld een atoom dat vroeger false was nu ineens true maken. In de context van non-monotone programmas is het soms minder duidelijk wat er bedoeld wordt met incremental...}

% \section{Discussion}\label{sec:disc}
%  	\input{05-disc}

\section{Conclusion}\label{sec:concl}
In this paper, we presented a novel technique for knowledge compilation of general logic programs; our technique extends previously defined algorithms for positive logic programs.  
Our work is based on the constructive nature of the well-founded semantics:
we showed that the algebraical concept of a well-founded induction translates into a family of anytime knowledge compilation algorithms. 
We used this to show that Boolean circuits are at least as succinct as logic programs (under the parametrised well-founded semantics). 
Our technique also extends to Kripke-Kleene semantics and to other knowledge representation formalisms. 
Extending the implementation by \citet{Jonas} to general logic programs and testing it on a set of benchmarks are topics for future work.

\newpage
% \small
\bibliographystyle{acmtrans}

\bibliography{krrlib,customrefs}
% \newpage
% \appendix
 \section{Figures}\label{app:figures}
 This appendix contains some figures associated with the gear wheels example (Example \ref{ex:gear}).  The first figure contains a circuit representation of the parametrised well-founded model of logic program $\PP_w$ from Example \ref{ex:gear}. 

\begin{figure}[htb]
  \centering
  \includegraphics{./circuit_wheel_equiv.pdf}
  \caption{A circuit representation of the gear wheel theory $\Th(\sstruct_w)$.}
  \label{fig:wheel}
\end{figure}

The next figure contains a circuit representation of the parametrised well-founded model of the following logic program $\PP_{w,2}$ that represent the gear wheel example with time ranging from $0$ to $2$:
  \[\left\{
\begin{array}{ll}
\turns_1(0) \lrule \turns_2(0)&
\turns_2(0) \lrule \turns_1(0)\\
\turns_1(1) \lrule \turns_2(1)&
\turns_2(1) \lrule \turns_1(1)\\
\turns_1(2) \lrule \turns_2(2)&
\turns_2(2) \lrule \turns_1(2)\\
\turns_1(1) \lrule \turns_1(0) \land \lnot \button_1(0)&
\turns_2(1) \lrule \turns_2(0) \land \lnot \button_2(0)\\
\turns_1(1) \lrule \lnot \turns_1(0) \land \button_1(0)&
\turns_2(1) \lrule \lnot \turns_2(0) \land \button_2(0)\\
\turns_1(2) \lrule \turns_1(1) \land \lnot \button_1(1)&
\turns_2(2) \lrule \turns_2(1) \land \lnot \button_2(1)\\
\turns_1(2) \lrule \lnot \turns_1(1) \land \button_1(1)&
\turns_2(2) \lrule \lnot \turns_2(1) \land \button_2(1)
 \end{array}
\right\}\]

\clearpage
\begin{figure}[htb]
  \centering
  \includegraphics[angle=-90]{./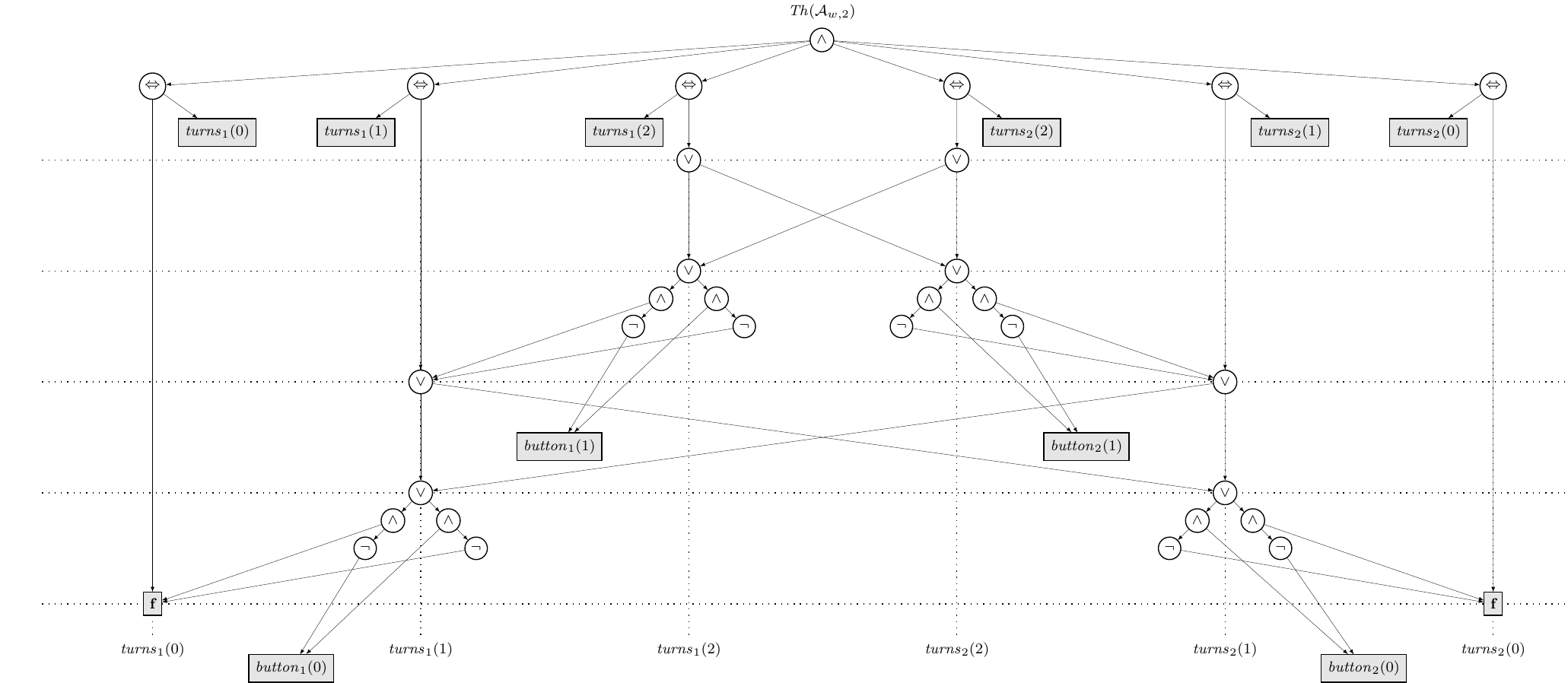}
  \caption{A circuit representation of the gear wheel example for up to two time points.}
  \label{fig:wheel:big}
\end{figure}
\clearpage

 \newpage
 \section{Proofs}
 \renewcommand{\thetheorem}{A.\arabic{theorem}}
 \label{app:proofs}
 \proofs
% \newpage
% 
% \section{OLD TEXT}
% \input{99-old}
% 	

\end{document}